\documentclass[11pt,fleqn]{article}

\usepackage{fullpage}
\usepackage{amsthm, amssymb, amsmath}
\usepackage{dsfont}
\usepackage{color}
\usepackage{xspace}

\usepackage{ifpdf}
\ifpdf    %
\usepackage{hyperref}
\else    %
\usepackage[hypertex]{hyperref}
\fi

\newtheorem{theorem}{Theorem}[section]
\newtheorem{lemma}[theorem]{Lemma}

\newtheorem{claim}[theorem]{Claim}

\newtheorem{corollary}[theorem]{Corollary}

\newcommand{\zo}{\{0,1\}}

\newcommand{\R}{{\mathbb{R}}}

\newcommand{\veps}{\varepsilon}
\newcommand{\eps}{\veps}
\renewcommand{\epsilon}{\varepsilon}
\newcommand{\tO}{\tilde{O}}

\newcommand{\EX}{{\mathbb E}}

\DeclareMathOperator{\Var}{Var}

\DeclareMathOperator{\sk}{{\bf sk}}
\DeclareMathOperator{\est}{{\bf est}}

\DeclareMathOperator{\wt}{wt}
\DeclareMathOperator{\IC}{IC}
\newcommand{\calP}{{\cal P}}
\newcommand{\Lhigh}{L_{\mathrm{high}}}
\newcommand{\indic}{\mathds{1}}

\providecommand{\eqdef}{:=}
\providecommand{\minn}[1]{\min\{{#1}\}}

\providecommand{\ceil}[1]{\lceil #1 \rceil}
\providecommand{\floor}[1]{\lfloor #1 \rfloor}
\providecommand{\abs}[1]{\lvert#1\rvert}
\providecommand{\card}[1]{\lvert#1\rvert}

\providecommand{\aset}[1]{\{ #1 \}}
\newcommand{\whp}{with high probability\xspace}
\newcommand{\half}{\tfrac12}

\def\compactify{\itemsep=0pt \topsep=0pt \partopsep=0pt \parsep=0pt}

\title{The Sketching Complexity of Graph Cuts%
\footnote{The current version differs slightly from an earlier one
(arXiv:1403.7058v1).
First, the lower bound for the number of edges in $(1+\veps)$-cut sparsifiers
is improved, and now our bound is tight.
Second, we retract our earlier claim that the sparsification algorithm
can be performed in two passes of streaming over the graph edges.
} }

\author{
Alexandr Andoni\thanks{%
Work done in part while the author was
at Microsoft Research Silicon Valley.
Email: \texttt{andoni@mit.edu}
}
\and
Robert Krauthgamer\thanks{%
Weizmann Institute of Science, Israel. 
Work supported in part by a US-Israel BSF grant \#2010418,
an Israel Science Foundation grant \#897/13,
and by the Citi Foundation.
Part of the work was done at Microsoft Research Silicon Valley.
Email: \texttt{robert.krauthgamer@weizmann.ac.il} 
}
\and
David P. Woodruff\thanks{%
IBM Almaden, USA.
Email: \texttt{dpwoodru@us.ibm.com}}
}

\begin{document}

\maketitle 

\begin{abstract}
We study the problem of sketching an input graph, so that given the
sketch, one can estimate the weight of any cut in the graph
within factor $1+\epsilon$. 
We present lower and upper bounds on the size of a randomized sketch, 
focusing on the dependence on the accuracy parameter $\epsilon>0$.

First, we prove that for every $\epsilon > 1/\sqrt n$,
every sketch that succeeds (with constant probability) 
in estimating the weight of all cuts $(S,\bar S)$ in an $n$-vertex graph 
(simultaneously), must be of size $\Omega(n/\epsilon^2)$ bits.
In the special case where the sketch is itself a weighted graph 
(which may or may not be a subgraph) 
and the estimator is the sum of edge weights across the cut in the sketch, i.e., a cut sparsifier, 
we show the sketch must have $\Omega(n/\epsilon^2)$ edges, which is optimal. 
Despite the long sequence of work on graph sparsification, 
no such lower bound was known on the size of a cut sparsifier. 

We then design a randomized sketch that, given $\epsilon\in(0,1)$ 
and an edge-weighted $n$-vertex graph, 
produces a sketch of size $\tilde O(n/\epsilon)$ bits, 
from which the weight of any cut $(S,\bar S)$ can be reported, 
with high probability, within factor $1+\epsilon$.
The previous upper bound is $\tilde O(n/\epsilon^2)$ bits,
which follows by storing a cut sparsifier 
as constructed by Bencz{\'u}r and Karger \cite{BK96} 
(or by followup work \cite{SS11,BSS12,FHHP11,KP12}). To obtain this
improvement, we critically use both that the sketch need only be correct
on each fixed cut with high probability (rather than on all cuts), and that
the estimation procedure of the data structure can be arbitrary (rather than
a weighted subgraph). 
We also show a lower bound of $\Omega(n/\epsilon)$ bits for
the space requirement of any data structure achieving this guarantee. 
As an application of our data structure,
we show how to compute, in a communication-efficient manner, 
a global minimum cut of a graph that is distributed across multiple servers.
\end{abstract}

\thispagestyle{empty}
\newpage
\setcounter{page}{1}

\section{Introduction}

In 1996 Bencz\'ur and Karger \cite{BK96} introduced \emph{cut
  sparsifiers}, a remarkable and very influential notion: given a
graph $G=(V,E,w)$ with $n=\card{V}$ vertices, $m=\card{E}$ edges and
edge weights $w:E\to\R_+$, together with a desired error parameter
$\eps\ge 0$, the cut sparsifier is a sparse graph $H$ on the same $n$
vertices, in fact a subgraph of $G$ but with different edge weights,
such that every cut in $G$ is $(1+\eps)$-approximated by the
corresponding cut in $H$.  Specifically, they show there always exists
such a sparsifier $H$ with only $O((n\log n)/\eps^2)$ edges,
potentially much fewer than in the original graph $G$, and moreover $H$ can
be constructed in time $O(m\log^2 n)$.  This construction has had 
a tremendous impact on cut problems in graphs, see
e.g.\ \cite{BK96,BK02, KL02, sherman2009breaking, madry2010fast}. Often, one
first replaces the original graph with a cut sparsifier and then solves an
optimization problem on the sparsifier.

The notion of cut sparsifiers turned out to be very influential.
Followup work offers alternative constructions and refined bounds 
\cite{GRV09-spanning, FHHP11, KP12}, as well as a
generalization of the notion, termed {\em spectral sparsifiers}
\cite{ST04a, ST11, SS11, BSS12}. A spectral sparsifier is a graph $H$ such
that the quadratic form associated with the Laplacian of $H$ 
provides $(1+\eps)$-approximation to that of $G$. 
The bound of $O(n/\eps^2)$ by Batson, Spielman, and Srivastava~\cite{BSS12}
improves also over the cut sparsifiers of Bencz\'ur and Karger~\cite{BK96} 
(in terms of graph size, but not construction time). 
These spectral sparsifiers have been instrumental
in obtaining the first near-linear time algorithm for solving SDD linear
systems \cite{ST04a}, and as such have led to many beautiful
ideas down the road, see also \cite{KMP10, KMP11}.

For some applications, the
value of $\eps$ may be determined so as to obtain the smallest approximation
error under the sole constraint that the sketch fits in memory. If, say, $\eps$ is
$1\%$ then $1/\eps^2 = 10000$ and a quadratic dependence 
might be prohibitive on large-scale graphs. 
It is therefore natural to ask whether the dependence on $\eps$ 
in the cut-sparsifier size can be improved. 
The Alon-Boppana theorem and
\cite{Alon97-expansion} suggest the answer is no --- if the sparsifier $H$ is
constrained to be a regular graph with uniform edge weights, 
then $\Omega(n/\eps^2)$ edges are required to approximate all the cut
values of the complete graph 
(for similar results about spectral sparsifiers, see \cite{Nil91, BSS12}).  
This lower bound is optimal \emph{quantitatively} -- for the complete graph, the sparsifier $H$ can be a regular expander graph \cite{LPS88-ramanujan, MSS13} -- 
but it is unsatisfactory \emph{qualitatively}, as it does not cover
the sparsifier $H$ being any reweighted subgraph of $G$, i.e., those that 
are not regular or of non-equal
edge weights. 
Moreover, it may be desirable to construct --- or to rule out ---
an arbitrary graph $H$ on the same vertex-set $V$
(but not necessarily a subgraph of $G$).
We address this issue by proposing and studying an even more general concept
than cut sparsifiers.

\subsection{Our Contribution}

Define a \emph{graph-cuts sketch} to be a data structure 
that can report the weight of any cut in an input graph $G$.
In the preprocessing stage, the sketch is constructed from $G=(V,E,w)$;
then, the sketch is used to answer, without access to $G$, the following queries: 
given any $S\subset V$, report the weight of the respective cut 
$w(S,\bar S) \eqdef \sum_{e\in E: \card{e\cap S}=1} w(e)$.
Throughout, we shall consider a reported answer to be correct
if it is a $(1+\eps)$-approximation of the true cut weight $w(S,\bar S)$.

A sketch clearly generalizes the notion of a cut sparsifier,
because the data structure can simply build and store a cut sparsifier $H$,
and then answer any query by reporting the weight of the corresponding cut in $H$.
By similar arguments, a sketch captures much more complicated scenarios,
such as an arbitrary graph $H$ on the same vertex-set (mentioned above)
or possibly on an augmented vertex-set $V'\supset V$,
or even storing multiple graphs $H_1,H_2,H_3$ and applying some computation 
(e.g., taking the minimum between their respective cut values).

When the sketch is randomized, 
we shall require that the sketch succeeds with high probability 
(say at least $0.9$, which can be amplified to $1-1/n$ by independent repetitions), 
which can be interpreted in two different ways:
(1) with high probability, for all cuts the answer is correct; or
(2) for each cut, with high probability the answer is correct.
In the compressed sensing literature, these guarantees are often termed
``for all'' and ``for each'', respectively.

\paragraph{Lower Bound.} 

We prove that every, possibly randomized, ``for all'' sketch
must have size $\Omega(n/\eps^2)$ bits. 
In the special case that the sketch is a weighted graph itself 
(which may or may not be a subgraph of $G$) and
the estimator is the sum of edge weights across the cut in the sketch, i.e.,
a cut sparsifier, we show the sketch must have $\Omega(n/\veps^2)$ edges, which is
optimal. 
Thus, we strengthen the lower bounds of Alon-Boppana and of~\cite{Alon97-expansion} 
in the following sense --- if every graph $G$ has 
a (not necessarily regular, and possibly weighted) cut sparsifier
with at most $s(n,\eps)$ edges, then $s=\Omega(n/\eps^2)$. 
Despite the long sequence of work on graph sparsifiers, the optimal
number of edges in a cut sparsifier was previously unknown. 

As our $\Omega(n/\eps^2)$ bit lower bound holds for every sketching algorithm, 
it covers also the potentially more sophisticated operations mentioned above.
Our lower bound applies also to sketches that can approximate the 
quadratic form of the Laplacian of $G$ within factor $1+\eps$, 
as such queries are known to include all cuts of $G$. 
Thus, in addition to our lower bound for cut sparsifiers, 
our result generalizes an $\Omega(n/\eps^2)$ edge lower bound 
of \cite{BSS12} for spectral sparsifiers of the complete graph, 
in the sense that our result holds for all sketching algorithms, 
not only those which approximate $G$ by another graph $H$. 

Several streaming algorithms produce cut or spectral sparsifiers 
with $O(n/\eps^2)$ edges using $\tO(n/\eps^2)$ bits of space in a stream
\cite{ag09, KL13-sparsification, goel2010graph, agm12, GKP12, KLMMS14}. 
Our lower bound shows these algorithms have optimal space, 
up to logarithmic factors, and the sparsifier they produce is optimal,
up to a constant factor, in its number of edges.

Our lower bounds are shown via communication complexity. 
In addition to strengthening prior work, 
it introduces to this area new tools from communication complexity,
an approach that will hopefully be useful also in related contexts. 
In fact, we do not know of a proof that does not use communication complexity. 
Our lower bound was recently used in \cite{kk15} to argue that 
their construction of $(1+\eps)$-cut-sparsifiers for $r$-uniform hypergraphs
has near-optimal size. 

\paragraph{Upper Bound.} 
Perhaps surprisingly, we 
construct a randomized ``for each'' sketch that achieves $(1+\eps)$-approximation 
using size $O(\tfrac{n}{\eps}(\log n)^{O(1)})$.
This result shows that a {\em linear} dependence on $1/\eps$
can be achieved if we do not require that all cuts be estimated simultaneously. 
We also show that our sketch size is optimal up to polylogarithmic factors,
as a ``for each'' sketch requires size $\Omega(n/\eps)$;
see Section \ref{sec:nEpsLB} for a proof.

We demonstrate how a ``for each'' sketch can be useful algorithmically
despite its relaxed guarantees compared to a cut sparsifier.
In particular, we show how to $(1+\eps)$-approximate the global minimum cut 
of a graph whose edges
are distributed across multiple servers. Distributed large-scale graph
computation has received recent attention, where protocols for 
distributed minimum spanning tree, breadth-first search, shortest paths,
and testing connectivity have been studied, among other problems
(see, e.g., \cite{knpr15,wz13}). In our case, each server locally computes
the ``for each'' data structure on its subgraph (for accuracy $\eps$), 
and sends it to a central server. 
Each server also computes a classical cut sparsifier, with 
fixed accuracy $\eps' = 0.2$, and sends it to the central server. Using that cut-sparsifiers
can be merged, the central server obtains a $(1 \pm \eps')$-approximation to all cuts
in the union of the graphs. By a result of Henzinger and Williamson \cite{hw96}
(see also Karger \cite{karger2000minimum}), there are only $O(n^2)$
cuts strictly within factor $1.5$ of the minimum cut, and they can be found efficiently
from the sparsifier (see \cite{karger2000minimum} for an $\tO(n^2)$ time way of implicitly
representing all such cuts). The central server then 
evaluates each ``for each'' data structure on each of these cuts, and sums up the
estimates to evaluate each such cut up to factor $1+\eps$, 
and eventually reports the minimum found. Note
that the ``for each'' data structures can be assumed to be correct with probability $1-1/n^3$
for any fixed cut (by independent repetitions), 
and therefore correct on all $O(n^2)$ candidate cuts. 

Recently, Chen et al. \cite{CQWZ14} showed how to generalize our ``for each'' upper bound
from cut queries to spectral queries, namely, 
to the case of preserving $x^TLx$ for a fixed $x$ up to a $1+\eps$ factor, where $L$ is
the Laplacian of a graph. Their data structure uses $\tO(n/\eps^{1.6})$ bits
of space.

\subsection{Techniques}

\paragraph{Lower Bound.} 

We prove our $\Omega(n/\eps^2)$ bit lower bound for arbitrary data structures using 
communication complexity. We then show how to obtain an $\Omega(n/\eps^2)$
edge lower bound for cut sparsifiers by encoding a sparsifier in a careful way
so that if it had $o(n/\eps^2)$ edges, it would violate an $\Omega(n/\eps^2)$ bit
lower bound in the communication problem. 

For the $\Omega(n/\eps^2)$ bit lower bound, the natural
thing to do would be to give Alice a graph $G$, and Bob a cut
$S$. Alice produces a sketch of $G$ and sends it to Bob, who must
approximate the capacity of $S$. The communication cost of this
problem lower bounds the sketch size. However, as we just saw, Alice
has an upper bound with only $\tO(n/\eps)$ bits of communication. We
thus need for Bob to solve a much harder problem which uses the fact
that Alice's sketch preserves all cuts.

We let $G$ be a disjoint union of $\eps^2 n/2$ graphs $G_i$, where
each $G_i$ is a bipartite graph with $\frac{1}{\eps^2}$ vertices in
each part. Each vertex in the left part is independently connected to
a random subset of half the vertices in the right part. Bob's problem
is now, given a vertex $v$ in the left part of one of the $G_i$, as
well as a subset $T$ of half of the vertices in the right part of that $G_i$,
decide if $|N(v) \cap T| > \frac{1}{4\eps^2} + \frac{c}{\eps}$, or if
$|N(v) \cap T| < \frac{1}{4\eps^2} - \frac{c}{\eps}$, for a small
constant $c > 0$. Most vertices $v$ will satisfy one of these
conditions, by anti-concentration of the binomial distribution. Note
that this problem is not a cut query problem, and so a priori it is
not clear how Bob can use Alice's sketch to solve it.

To solve the problem, Bob will do an exhaustive enumeration on cut
queries, and here is where we use that Alice's sketch preserves all
cuts. Namely, for each subset $S$ of half of the vertices in the left
part of $G_i$, Bob queries the cut $S \cup T$.  As Bob ranges over all
(exponentially many) such cuts, what will happen is that for most
vertices $u$ in the left part for which $|N(u) \cap T| >
\frac{1}{4\eps^2} + \frac{c}{\eps}$, the capacity of $S \cup T$ is a
``little bit'' larger if $u$ is excluded from $S$. This little bit is
not enough to be detected, since $|N(u) \cap T| = \Theta \left
(\frac{1}{\eps^2} \right )$ while the capacity of $S \cup T$ is
$\Theta \left ( \frac{1}{\eps^4} \right )$. However, as Bob range over
all such $S$, he will eventually get lucky in that $S$ contains all
vertices $u$ for which $|N(u) \cap T| > \frac{1}{4\eps^2} +
\frac{c}{\eps}$, and now since there are about $\frac{1}{2\eps^2}$
such vertices, the little $\frac{c}{\eps}$ bit gets ``amplified'' by a
factor of $\frac{1}{2\eps^2}$, which is just enough to be detected by
a $(1+\eps)$-approximation to the capacity of $S \cup T$. If Bob finds
the $S$ which maximizes the (approximate) cut value $S \cup T$, he can
check if his $v$ is in $S$, and this gives him a correct answer with
large constant probability. 

We believe our main contribution
is in designing a communication problem which requires Alice's sketch 
to preserve all cuts instead of only a single cut. 
There are also several details in the communication
lower bound for the problem itself, including a direct-sum theorem
for a constrained version of the Gap-Hamming-Distance problem,
which could be independently useful. 

For the $\Omega(n/\eps^2)$ edge lower bound for cut sparsifiers, 
the straightforward encoding would
encode each edge using $O(\log n)$ bits, and cause us to lose a $\log n$ factor
in the lower bound. Instead, we show how to randomly
round each edge weight in the sparsifier 
to an adjacent {\it integer}, and observe that the integer 
weights sum up to a small value in our communication problem. 
This ultimately allows to transmit, in a communication-efficient manner,
all the edge weights together with the edge identities.

Our lower bound for arbitrary data structures is given in Theorem \ref{thm:sketchLB}, 
and the application to cut sparsifiers is given in Theorem \ref{thm:sparsifierLB}.

\paragraph{Upper Bound.}
We now discuss the main ideas behind our sketch construction. Let us
first give some intuition why the previous algorithms cannot yield a
$\tilde O(n/\eps)$ bound, and show how our algorithm circumvents these
roadblocks on a couple of illustrative examples. For concreteness, it
is convenient to think of $\eps=1/\sqrt{n}$.

All existing cut and spectral sparsifiers construct the sparsifier
graph by taking a subgraph of the original graph $G$, with the
``right'' re-weightening of the edges \cite{BK96,SS11,BSS12,FHHP11,KP12}. 
In fact, except for \cite{BSS12}, they all proceed by sampling edges 
independently, each with its own probability (that depends on the graph).

Consider for illustration the complete graph. In this case, 
these sampling schemes employ a uniform probability $p\approx
\tfrac{1/\eps^2}{n}$ of sampling every edge. It is not hard to
see that one cannot sample edges with probability less than $p$, as
otherwise anti-concentration results suggest that even the degree of a
vertex (i.e., the cut of a ``singleton'') is not preserved within 
$1+\eps$ approximation. 
Perhaps a more interesting example is a random graph ${\mathcal G}_{n,1/2}$;
if edges are sampled independently with (roughly) uniform probability,
then again it cannot be less than $p$, because of singleton cuts.
However, if we aim for a sketch for the complete graph or ${\cal G}_{n,1/2}$,
we can just store the degree of each vertex using only $O(n)$
space, and this will allow us to report the value of every singleton
cut (which is the most interesting case, as the standard deviation for
these cut values have multiplicative order roughly $1\pm \eps$).
These observations suggest that {\em sketching} a graph may go beyond 
considering a subgraph (or a different graph) to represent the
original graph $G$.

Our general algorithm proceeds in several steps. The core of our
algorithm is a procedure for handling cuts of value $\approx 1/\eps^2$ 
in a graph with unweighted edges, which proceeds as follows. 
First, repeatedly partition the graph along every {\em sparse} cut, 
namely, any cut whose sparsity is below $1/\eps$. 
This results with a partition of the vertices into some number of parts.
We store the cross-edges (edge connecting different parts) explicitly.
We show the number of such edges is only $\tilde O(n/\eps)$, 
and hence they fit into the space alloted for the sketch.
Obviously, the contribution of these edges to any desired cut $w(S,\bar S)$ 
is easy to compute from this sketch.

The sketching algorithm still needs to estimates the contribution 
(to a cut $w(S,\bar S)$ for a yet unknown $S\subset V$)
from edges that are inside any single part $P$ of the partition. 
To accomplish this, we sample $\approx 1/\eps$ edges out of each vertex, 
and also store the exact degrees of all vertices. 
Then, to estimate the contribution of edges inside a part $P$ to $w(S,\bar S)$,
we take the sum of (exact) degrees of all vertices in $S\cap P$,
\emph{minus} an estimate for (twice) the number of edges inside $S\cap P$
(estimated from the edge sample).
This ``difference-based'' estimate has a smaller variance than 
a direct estimate for the number edges in $(S\cap P, \bar S\cap P)$ 
(which would be the ``standard estimate'', in some sense
employed by previous work). 
The smaller variance is achieved thanks to the facts that 
(1) the assumed cut is of size (at most) $1/\eps^2$; and 
(2) there are no sparse cuts in $P$.

Overall, we achieve a sketch size of $\tilde O(n/\eps)$.  We can
construct the sketch in polynomial time by employing an $O(\sqrt{\log
  n})$-approximation algorithm for sparse cut
\cite{ARV09,sherman2009breaking} or faster algorithms with
$(\log^{O(1)} n)$-approximation \cite{madry2010fast}.

The formal statement of our main upper bound appears in 
Theorem~\ref{thm:upper}.

\section{Lower Bounds}
\label{sec:LB}

We first prove a lower bound on the size of ``for all'' sketches,
and then use it to prove a tight lower bound on the size of cut sparsifiers.
(For lower bounds on the size of ``for each'' sketches, 
see Section~\ref{sec:nEpsLB}.)

\begin{theorem} \label{thm:sketchLB}
Fix an integer $n$ and $\veps\in [1/n,1/{\sqrt n}]$, 
and let $\sk=\sk_{n,\veps}$ and $\est=\est_{n,\veps}$ be ``sketching'' 
and ``estimation'' algorithms for (unweighted) graphs on vertex set $[n]$,
which may be randomized.
Suppose that for every such graph $G=([n],E)$, 
with probability at least $3/4$ the resulting sketch $\sk(G)$ satisfies%
\footnote{The probability is over the two algorithms' common randomness;
equivalently, the two algorithms have access to a public source of random bits}
$$
  \forall S\subset [n],\quad \est\big(S,\sk(G)\big) \in\ (1\pm\veps)\cdot \card{E(S,\bar S)}.
$$
Then the worst-case space requirement of $\sk$ is $\Omega(n/\veps^2)$ bits.
\end{theorem}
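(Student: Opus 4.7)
The plan is to reduce from a one-way communication problem. Alice will hold an input graph $G$ drawn from a hard distribution, compute the sketch, and send it to Bob; Bob holds an auxiliary query that he will answer by performing exhaustive cut queries against Alice's sketch. Any sketch of size $s$ then yields an $s$-bit protocol, so proving an $\Omega(n/\veps^2)$ communication lower bound will suffice.

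For the hard distribution, I would let $G$ be a disjoint union of $k = \Theta(\veps^2 n)$ bipartite graphs $G_1,\dots,G_k$, each with $L_i \cup R_i$ where $|L_i|=|R_i|=1/\veps^2$, and where for every $u \in L_i$ the neighborhood $N(u) \subseteq R_i$ is an independent uniformly random subset of size exactly $|R_i|/2$. Bob's task is a "direct sum" instance: he receives an index $i^*$, a distinguished vertex $v \in L_{i^*}$, and a set $T \subseteq R_{i^*}$ of size $|R_{i^*}|/2$, and must decide whether $|N(v)\cap T| \ge \tfrac{1}{4\veps^2}+\tfrac{c}{\veps}$ or $|N(v)\cap T| \le \tfrac{1}{4\veps^2}-\tfrac{c}{\veps}$ for a small constant $c>0$. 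By Berry--Esseen / binomial anti-concentration, a $v$ drawn uniformly from $L_{i^*}$ lies on one of the two sides with constant probability each, so this gives a single-copy problem that I would show requires $\Omega(1/\veps^2)$ bits (a constrained Gap-Hamming-Distance instance).

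The heart of the argument is the reduction from this decision task to cut queries answered by the sketch. Bob enumerates every subset $S \subseteq L_{i^*}$ of size $\tfrac{1}{2\veps^2}$ and queries the cut $w(S \cup T,\,\overline{S\cup T})$. The contributions from $G_j$ for $j\neq i^*$ are constants (independent of $S$), and the contribution from $G_{i^*}$ decomposes as
\[
|E(S,R_{i^*}\setminus T)| + |E(L_{i^*}\setminus S,T)|
= \text{const}(T) + \sum_{u \in L_{i^*}\setminus S} \bigl(|N(u)\cap T| - \tfrac{1}{4\veps^2}\bigr) \cdot 2 + \text{(lower order)},
\]
up to a linear correction of the same form. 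Maximizing over $S$ thus selects exactly the $\tfrac{1}{2\veps^2}$ vertices of $L_{i^*}$ with the largest $|N(u)\cap T|$. Crucially, although each individual deviation is only $\Theta(1/\veps)$ on a cut of size $\Theta(1/\veps^4)$ (far below the $(1+\veps)$ resolution), summing over the $\Theta(1/\veps^2)$ winning vertices yields a gap of $\Theta(1/\veps^3)$, which is exactly a $1+\Theta(\veps)$ multiplicative perturbation of the cut and is therefore detectable. Bob then reports "above" iff $v$ belongs to the maximizing $S^*$, which is correct with constant probability whenever the sketch is correct on all of the exponentially many queried cuts simultaneously; this is precisely where the "for all" guarantee is used.

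Finally, to get $\Omega(n/\veps^2)$ bits rather than only $\Omega(1/\veps^2)$, I would establish a direct-sum statement: across the $k = \Theta(\veps^2 n)$ independent copies $G_1,\dots,G_k$, Bob can solve a uniformly random coordinate $i^*$ using the single sketch, so an information-cost / direct-sum argument inflates the per-copy $\Omega(1/\veps^2)$ bound to $\Omega(n/\veps^2)$. The main obstacle, I expect, is two-fold: (i) verifying that the $(1+\veps)$-approximation error of the sketch cannot mask the $\Theta(1/\veps^3)$ amplified gap even when one maximizes over exponentially many cuts (handled by the union bound built into the "for all" assumption and by carefully balancing $c$), and (ii) proving the direct-sum lower bound for the constrained Gap-Hamming problem, since the standard GHD lower bounds must be adapted to the setting where Bob's string $T$ has exactly half ones and Alice's inputs across coordinates are correlated only through the single sketch.
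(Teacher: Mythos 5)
Your proposal follows essentially the same route as the paper's proof: the same hard instance (a disjoint union of $\Theta(\veps^2 n)$ bipartite blocks with $1/\veps^2$ vertices per side and random half-size neighborhoods), the same exhaustive enumeration of half-size sets $S$ and queries to the cuts $S\cup T$ with the $\Theta(1/\veps^2)$-fold amplification of the per-vertex $\Theta(1/\veps)$ bias, and the same plan of a direct-sum/information-complexity lower bound for a Hamming-weight-constrained Gap-Hamming problem (which the paper carries out via the Braverman--Garg--Pankratov--Weinstein bound). The only points to tighten are details the paper also addresses: the constants race between the $\approx 1/\veps^3$ signal and the $\veps\cdot\Theta(1/\veps^4)$ sketch error is resolved there by storing exact vertex degrees and invoking the sketch guarantee at accuracy $\gamma\veps$ for a small constant $\gamma$ (rather than by tuning $c$), and with your sign convention the cut-maximizing $S^*$ consists of the vertices with \emph{smallest} $|N(u)\cap T|$, so Bob should report ``above'' iff $v\notin S^*$.
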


\begin{theorem} \label{thm:sparsifierLB}
For every integer $n$ and $\veps\in [1/n,1/{\sqrt n}]$, 
there is an $n$-vertex graph $G$ for which every $(1+\eps)$-cut sparsifier $H$
has $\Omega(n/\eps^2)$ edges, 
even if $H$ is not required to be a subgraph of $G$.
\end{theorem}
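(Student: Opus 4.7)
The plan is to derive Theorem~\ref{thm:sparsifierLB} from Theorem~\ref{thm:sketchLB}. Any $(1+\eps)$-cut sparsifier $H$ of $G$ is already a sketch (with estimator ``sum of $H$-weights crossing the queried cut''), so if we can encode $H$ in $f(m_H)$ bits then we obtain a sketch of that size, which on the hard instance of Theorem~\ref{thm:sketchLB} must satisfy $f(m_H) = \Omega(n/\eps^2)$. A naive edge-list encoding of $H$ would cost $O(m_H\log n)$ bits and give only the weaker bound $m_H = \Omega(n/(\eps^2\log n))$; the crux is to compress $H$ down to $O(m_H)$ bits, i.e.\ $O(1)$ bits per edge on average.

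I would re-use the hard instance from Theorem~\ref{thm:sketchLB}: a disjoint union of $\Theta(\eps^2 n)$ bipartite ``blocks'' $G_i$, each on $\Theta(1/\eps^2)$ vertices with $\Theta(1/\eps^4)$ unit-weight edges. Two structural features of this instance drive the compression. First, the cut separating any block from the rest of $V$ has weight $0$ in $G$, so by the $(1+\eps)$ guarantee it has weight $0$ in $H$ too; hence $H$ carries no cross-block edges and decomposes as $H=\bigsqcup_i H_i$. This caps each edge's address space at the $\binom{O(1/\eps^2)}{2}=O(1/\eps^4)$ possible edges inside a known block. Second, since singleton cuts are preserved up to $1+\eps$, the total weight of each $H_i$ is $\Theta(1/\eps^4)$ and the grand total is $\Theta(n/\eps^2)$.

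The main compression step is randomized rounding: round each weight $w_e$ of $H$ independently to a neighboring integer $w'_e\in\{\lfloor w_e\rfloor,\lceil w_e\rceil\}$ with $\EX[w'_e]=w_e$. For any fixed cut $S$ of $G$, the rounding error on the $H$-weight of $(S,\bar S)$ is a zero-mean sum of bounded independent variables; a Bernstein/Hoeffding bound followed by a union bound over the $2^n$ cuts of $G$ shows that with positive probability the integer-weighted $H'$ is still a $(1+O(\eps))$-cut sparsifier. I would then encode $H'$ block by block: inside a block on $N=O(1/\eps^2)$ vertices, $H'_i$ is a multi-subset of the $\binom{N}{2}=O(1/\eps^4)$ edge slots with integer total weight $W_i=\Theta(1/\eps^4)$, and the number of such multisets has logarithm $O(W_i)=O(1/\eps^4)$ bits. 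Summing over the $\Theta(\eps^2 n)$ blocks yields a total encoding of $O(n/\eps^2)$ bits, matching Theorem~\ref{thm:sketchLB} up to a constant and forcing $m_H=\Omega(n/\eps^2)$.

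The main obstacle is managing constants: the block-wise multiset encoding pays an additive $\Theta(n/\eps^2)$ term independent of $m_H$, so obtaining a genuine $\Omega(n/\eps^2)$ edge bound requires rescaling the hard instance (or tightening the constant in Theorem~\ref{thm:sketchLB}) so that a hypothetical sparsifier with $o(n/\eps^2)$ edges produces a sketch strictly shorter than the bit lower bound permits. A secondary technical nuisance is that cuts of small weight in $G$ leave little absolute slack to absorb $\pm 1$ rounding errors; handling the small-weight edges separately, or rounding them with a proportionally smaller step size, should make the union-bound concentration argument go through cleanly.
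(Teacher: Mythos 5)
Your reduction, hard instance, block decomposition (via the zero-weight cuts separating blocks), and unbiased randomized rounding of the weights are exactly the ingredients of the paper's proof, but the final encoding step as you describe it cannot yield any edge lower bound, and this is a genuine gap, not a constant-management issue. You encode each block $H'_j$ as a multiset of edge slots with prescribed total weight $W_j=\Theta(1/\eps^4)$; the log of the number of such multisets is $\Theta(1/\eps^4)$ \emph{regardless of how many edges $H_j$ has}, so the total encoding length is $\Theta(n/\eps^2)$ bits even if $H$ had, say, $n$ edges. An encoding whose length does not depend on the number of edges can only certify consistency with the $\Omega(n/\eps^2)$ \emph{bit} bound; it can never force $m_H=\Omega(n/\eps^2)$, and no rescaling of the instance or tightening of the (unspecified) constant in Theorem~\ref{thm:sketchLB} changes that. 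The paper's fix is to make the cost scale with $s_j$: first encode the support ($s_j$ edges out of $\le 2/\eps^4$ slots, costing $\log\binom{2/\eps^4}{s_j}$ bits) and then the integer weights on that support (costing $\log\binom{4/\eps^4}{s_j}$ bits, using that the weights sum to $O(1/\eps^4)$), and to handle the few ``dense'' blocks with $s_j>\gamma^2/(6\eps^4)$ by storing $G_j$ verbatim --- their number is at most $O(\eps^4 s)$, so they contribute only $O(s)$ bits. This gives $m=O\bigl(s(1+\log(\eps^{-2}n/s))\bigr)$, which together with $m\ge\Omega(n/\eps^2)$ does force $s=\Omega(n/\eps^2)$.

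A second, smaller problem is your claim that after rounding, $H'$ is still a $(1+O(\eps))$-cut sparsifier, proved by a union bound over all $2^n$ cuts. For low-weight cuts (e.g.\ singleton cuts of weight $\Theta(1/\eps^2)$) the allowed slack $\eps\cdot\Theta(1/\eps^2)$ can be smaller than the typical rounding fluctuation, and the per-cut failure probability $e^{-\Theta(1/\eps^2)}$ is far too weak against $2^n$ cuts when $\eps\gg 1/\sqrt n$; your suggested patches (separate treatment of light edges, finer rounding grid) would inflate the encoding. The paper sidesteps all of this: it only needs the rounded $H'_j$ to answer the cuts Bob actually queries --- cuts contained in a single block --- within \emph{additive} error $\gamma/(2\eps^3)$, so Hoeffding with at most $\gamma^2/(6\eps^4)$ edges against a union bound over only $2^{2/\eps^2}$ within-block cuts suffices. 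Reformulating your correctness requirement in this additive, per-block form (which is what Lemma~\ref{lem:BobRecovers} tolerates) is needed to make the rounding step sound.
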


\subsection{Sketch-size lower bound}
\label{sec:sketchLB}

We prove Theorem~\ref{thm:sketchLB} using the following communication 
lower bound for a version of the Gap-Hamming-Distance problem, 
whose proof appears in Section~\ref{sec:CC}.
Throughout, we fix $c\eqdef 10^{-3}$ (or a smaller positive constant),
and assume $\veps\leq c/10$.

\begin{theorem} \label{thm:GHD}
Consider the following distributional communication problem:
Alice has as input $n/2$ strings $s_1,\ldots,s_{n/2} \in \zo^{1/\veps^2}$
of Hamming weight $\tfrac{1}{2\veps^2}$, 
and Bob has an index $i\in[n/2]$ together with one string $t\in \zo^{1/\veps^2}$
of Hamming weight $\tfrac{1}{2\veps^2}$, 
drawn as follows:%
\footnote{Alice's input and Bob's input are \emph{not} independent,
but the marginal distribution of each one is uniform over its domain,
namely, $\zo^{(n/2)\times (1/\veps^2)}$ and $[n]\times \zo^{1/\veps^2}$,
respectively.
}
\begin{itemize} \compactify
\item $i$ is chosen uniform at random; 
\item $s_i$ and $t$ are chosen uniformly at random but conditioned on 
their Hamming distance $\Delta(s_i,t)$ being, with equal probability,  
either $\geq \tfrac{1}{2\veps^2} + \tfrac{c}{\veps}$ 
or     $\leq \tfrac{1}{2\veps^2} - \tfrac{c}{\veps}$;
\item the remaining strings $s_{i'}$ for $i'\neq i$ are chosen uniformly at random.
\end{itemize}
Consider a (possibly randomized) one-way protocol,
in which Alice sends to Bob an $m$-bit message,
and then Bob determines, with success probability at least $2/3$, 
whether $\Delta(s_i,t)$
is $\geq \tfrac{1}{2\veps^2} + \tfrac{c}{\veps}$ or $\leq \tfrac{1}{2\veps^2} - \tfrac{c}{\veps}$.
Then Alice's message size is $m\geq \Omega(n/\veps^2)$ bits. 
\end{theorem}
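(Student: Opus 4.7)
The proof plan is to combine (a) a single-instance information-cost lower bound of $\Omega(1/\veps^2)$ for a weight-constrained Gap-Hamming-Distance problem with (b) a direct-sum amplification tailored to the ``indexed'' structure in the theorem.

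\textbf{Single-instance lower bound.} Consider the one-shot variant: Alice has a single string $s\in\zo^{1/\veps^2}$ of weight $1/(2\veps^2)$, Bob has a single string $t$ of the same weight, drawn uniformly conditioned on $\Delta(s,t)$ being in the prescribed gap, equally often on either side. I would adapt the standard $\Omega(N)$ one-way randomized lower bound for Gap-Hamming-Distance on $N$-bit strings (Indyk--Woodruff, sharpened by Chakrabarti--Regev) to this weight-$N/2$ distribution. The adaptation is essentially routine: uniformly conditioning on weight $N/2$ is close enough to the unconstrained $\zo^N$ distribution that the classical arguments carry over with minor modification, and the gap $c/\veps=c\sqrt{N}$ is exactly of the order that arises in the original lower bound. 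This step delivers one-way information cost $I(\Pi;s\mid t) = \Omega(1/\veps^2)$.

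\textbf{Direct-sum amplification.} Let $\Pi$ be Alice's message and $m=|\Pi|$. The target inequality is
\[
   m \;\geq\; H(\Pi) \;\geq\; I(\Pi;\,s_1,\ldots,s_{n/2}\mid t,i) \;\geq\; \Omega(n/\veps^2).
\]
Under the specified distribution, conditioned on $(i,t)$ the strings $s_1,\ldots,s_{n/2}$ are mutually independent: only $s_i$ is coupled to $t$ through the gap condition, while each other $s_{j'}$ is an independent uniform weight-$1/(2\veps^2)$ string. Conditional independence gives
\[
   I(\Pi;\,s_1,\ldots,s_{n/2}\mid t,i) \;\geq\; \sum_{j=1}^{n/2} I(\Pi;\,s_j\mid t,i).
\]
For each index $j$, I would argue that $I(\Pi;s_j\mid t,\,i=j)\geq\Omega(1/\veps^2)$ by embedding the single-instance problem: on the event $\{i=j\}$ the pair $(s_j,t)$ has exactly the single-instance GHD distribution of the previous step, and Alice's message together with Bob's decoder solves it (with the other $s_{j'}$ playing the role of Alice's private randomness). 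Hence the single-instance information-cost bound applies verbatim to $I(\Pi;s_j\mid t,i=j)$.

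\textbf{Main obstacle.} The technically delicate point is passing from these per-index conditional bounds to the unconditional sum $\sum_j I(\Pi;s_j\mid t,i) \geq \Omega(n/\veps^2)$. A naive averaging over $i$ gives only $I(\Pi;s_j\mid t,i)\geq (2/n)\cdot\Omega(1/\veps^2)$, which sums to a disappointing $\Omega(1/\veps^2)$; precisely the factor $n$ we need is lost through the activation probability $\Pr[i=j]=2/n$. Recovering it relies on the fact that Alice's message $\Pi$ is oblivious to $i$ --- it is a function of $s_1,\ldots,s_{n/2}$ alone, and must therefore simultaneously support any possible query of Bob. I would formalize this via a direct-sum theorem for the one-way internal information cost in the indexed setting, in the spirit of Bar-Yossef--Jayram--Kumar--Sivakumar, which says that when Alice holds $n/2$ independent ``coordinates'' and Bob selects one uniformly to query, the total one-way communication is at least $n/2$ times the single-coordinate information cost. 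Crafting and applying such a direct-sum theorem to this specific constrained-GHD instance --- in a way that correctly aggregates the per-index contributions without dilution --- is the piece of the argument I expect to be the hardest and most novel part of the proof.
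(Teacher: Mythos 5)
Your high-level plan (a single-instance information bound for weight-constrained Gap-Hamming plus a direct sum over the $n/2$ coordinates) is the same strategy the paper uses, but the proposal stops exactly at the point where the real work lies, and the step you defer is not a black-box application of a BJKS-style theorem. The dilution by $\Pr[i=j]=2/n$ that you flag is avoided in the paper by \emph{changing the hard distribution}: in the $n$-fold problem (Theorem \ref{thm:GHD2}) Alice's strings $S_1,\ldots,S_n$ are i.i.d.\ uniform weight-$\tfrac{1}{2\eps^2}$ strings, \emph{independent} of $(I,T)$, and Bob is only required to be correct on the event that the gap $\abs{\Delta(S_I,T)-\tfrac{1}{2\eps^2}}\geq \tfrac{c}{\eps}$ actually holds. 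With that formulation one lower-bounds the \emph{unconditional} quantities $I(M;S_j)$ (no conditioning on $i$ or $t$), sums them by independence and the chain rule, and for each ``good'' $j$ the embedding (Alice privately samples the other coordinates, Bob sets $I=j$) reproduces exactly the single-instance distribution, with no $2/n$ loss. Your decomposition $I(\Pi;s_1,\ldots,s_{n/2}\mid t,i)\geq\sum_j I(\Pi;s_j\mid t,i)$, taken under the theorem's gap-conditioned distribution, does suffer the dilution, and no generic one-way direct-sum theorem applied to that distribution will recover the factor $n$; the fix is precisely the distributional decoupling above, which your proposal does not supply.

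This change of distribution also breaks your single-instance step as stated: the per-coordinate information bound you need is under the \emph{product} distribution $\zeta$ (both strings uniform of fixed weight, correctness required only when the gap event occurs), not under the gap-conditioned distribution you describe. Moreover, what is needed is an \emph{information-cost} lower bound, whereas Indyk--Woodruff and Chakrabarti--Regev give communication lower bounds; calling the adaptation ``essentially routine'' hides two nontrivial pieces that the paper handles explicitly: relating the gap-conditioned distribution $\mu$ to $\zeta$ (Lemma \ref{lem:icost}), and transferring the known \emph{information-complexity} bound for Gap-Hamming under the unconstrained uniform distribution \cite{BGPW13} to the fixed-weight distribution via a padding argument whose information overhead $O(\eps^{-3/2}\log(1/\eps))$ must be shown to be lower order (Lemma \ref{lem:icost2}). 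So the proposal is directionally right but has a genuine gap at both the direct-sum step and the single-instance information bound.
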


We can interpret the lower bound of Theorem \ref{thm:GHD} as follows:
Consider a (possibly randomized) algorithm that produces an $m$-bit
sketch of Alice's input $(s_1,\ldots,s_{n/2})\in\zo^{n/2\veps^2}$, and
suppose that the promise about $\Delta(s_i,t)$ can be decided
correctly (with probability at least $3/4$) given (only) the sketch
and Bob's input $(i,t)\in [n/2]\times \zo^{1/\veps^2}$.  Then $m\geq
\Omega(n/\veps^2)$.

We now prove Theorem \ref{thm:sketchLB} by a reduction to the above communication problem,
interpreting the one-way protocol as a sketching algorithm, as follows.
Given the instance $(s_1,\ldots,s_{n/2},i,t)$, define an $n$-vertex graph $G$ 
that is a disjoint union of the graphs $\aset{G_j: j\in[\veps^2n/2]}$,
where each $G_j$ is a bipartite graph, whose two sides,
denoted $L(G_j)$ and $R(G_j)$, are of size 
\[ \card{L(G_j)}=\card{R(G_j)}=1/\veps^2. \]
The edges of $G$ are determined by $s_1,\ldots,s_{n/2}$,
where each string $s_u$ is interpreted as a vector of indicators for 
the adjacency between vertex $u\in \cup_{j\in[\veps^2n/2]} L(G_j)$
and the respective $R(G_j)$.

Observe that Alice can compute $G$ without any communication,
as this graph is completely determined by her input.
She then builds a sketch of this graph, that with probability $\geq 99/100$, 
succeeds in simultaneously approximating all cut queries within factor 
$1\pm\gamma\veps$, where $\gamma>0$ is a small constant to be determined later. 
This sketch is obtained from the theorem's assumption about $m$-bit sketches 
by standard amplification of the success probability from $3/4$ to $0.99$
(namely, repeating $r=O(1)$ times independently 
and answering any query with the median value of the $r$ answers).
Alice then sends this $O(m)$-bit sketch to Bob. 

Bob then uses his input $i$ to compute $j=j(i)\in[\veps^2n/2]$ 
such that the graph $G_j$ contains vertex $i$ 
(i.e., the vertex whose neighbors are determined by $s_i$).
Bob also interprets his input string $t$ as a vector of indicators
determining a subset $T\subseteq R(G_j)$.
By construction of $G_j$ 
, the neighbor sets $N(v)$ of the vertices $v\in L(G_j)\setminus \aset{i}$
are uniformly distributed, independently of $T$ and of each other;
in particular, each $\card{N(v)\cap T}$ has a Binomial distribution 
$B(\tfrac{1}{\veps^2},\tfrac14)$.

\begin{lemma} \label{lem:BobRecovers}
Using the $O(m)$-bit sketch he received from Alice, 
Bob can compute a ``list'' $B\subset L(G_j)$ of size 
$\card{B} = \half\card{L(G_j)} = \tfrac{1}{2\veps^2}$,
and with probability at least $0.96$, this list contains 
at least $\tfrac{4}{5}$-fraction of the vertices in the set 
\begin{equation} \label{eq:Nlarge}
 \Lhigh \eqdef \aset{v\in L(G_j):\ 
      \card{N(v)\cap T} \geq \tfrac{1}{4\veps^2} + \tfrac{c}{\veps}
      }.
\end{equation}
Moreover, Bob uses no information about his input $i$ other than $j=j(i)$.
\end{lemma}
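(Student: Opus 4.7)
The plan is to have Bob enumerate all subsets $S\subset L(G_j)$ of size exactly $\tfrac{1}{2\veps^2}$, feed each cut $S\cup T$ into $\est(\cdot,\sk(G))$, and output $B:=L(G_j)\setminus S^*$, where $S^*$ is the $S$ maximizing the received estimate. Since Bob only accesses the identifier $j=j(i)$ together with his own input $t$, the ``moreover'' clause is automatic.

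First I would rewrite the true cut value as an affine function of a simpler quantity. Because $G$ is a disjoint union of bipartite graphs $G_{j'}$ and $S\cup T$ is contained in $V(G_j)$, only $G_j$-edges can cross the cut; splitting these by whether each endpoint lies in $S$ or in $T$, and using $|N(v)|=\tfrac{1}{2\veps^2}$ for every $v\in L(G_j)$ (from the Hamming-weight constraint on $s_v$), a direct count yields
\[
c(S\cup T) \;=\; |S|\cdot\tfrac{1}{2\veps^2} \;+\; |E(L(G_j),T)| \;-\; 2f(S),
\qquad f(S):=\sum_{v\in S}|N(v)\cap T|.
\]
With $|S|=\tfrac{1}{2\veps^2}$ fixed, only $f(S)$ varies, so maximizing $c(S\cup T)$ is equivalent to minimizing $f(S)$. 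Standard median-of-$O(1)$ amplification turns the sketch into one that, with probability $\ge 0.99$, estimates every cut \emph{simultaneously} within a $(1\pm\gamma\veps)$ factor, for a constant $\gamma$ to be chosen later. Since $c(S\cup T)=O(1/\veps^4)$, this is an additive error of $O(\gamma/\veps^3)$ per cut; in particular $f(S^*)\le f(S_{\mathrm{opt}})+O(\gamma/\veps^3)$, where $S_{\mathrm{opt}}:=\arg\min_{|S|=1/(2\veps^2)} f(S)$.

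Next I would use anti-concentration to control $S_{\mathrm{opt}}$. For each $v\in L(G_j)\setminus\{i\}$, $N(v)$ is a uniform $\tfrac{1}{2\veps^2}$-subset of $R(G_j)$, so $|N(v)\cap T|$ is hypergeometric with mean $\tfrac{1}{4\veps^2}$ and standard deviation $\Theta(1/\veps)$. By Berry--Esseen together with standard concentration of the corresponding indicator sums, with probability at least $0.97$ the following hold simultaneously: (a) $|\Lhigh|\ge \alpha/\veps^2$ for some absolute constant $\alpha=\Theta(c)>0$; and (b) the $\tfrac{1}{2\veps^2}$-th order statistic of $\{|N(v)\cap T|\}_{v\in L(G_j)}$ is at most $\tfrac{1}{4\veps^2}+\tfrac{c}{2\veps}$. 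Property (b) forces $\Lhigh\cap S_{\mathrm{opt}}=\emptyset$, since every $v\in S_{\mathrm{opt}}$ then has $|N(v)\cap T|<\tfrac{1}{4\veps^2}+\tfrac{c}{\veps}$.

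Finally I would run a swap argument. Let $m:=|S^*\cap\Lhigh|$ and $A:=S^*\cap\Lhigh$. Since $\Lhigh\cap S_{\mathrm{opt}}=\emptyset$ we have $A\subset S^*\setminus S_{\mathrm{opt}}$ and hence $|S_{\mathrm{opt}}\setminus S^*|\ge m$; pick any $A'\subset S_{\mathrm{opt}}\setminus S^*$ with $|A'|=m$ and set $S':=(S^*\setminus A)\cup A'$. Each $v\in A$ contributes at least $\tfrac{1}{4\veps^2}+\tfrac{c}{\veps}$ to $f$, while each $u\in A'\subset S_{\mathrm{opt}}$ contributes at most $\tfrac{1}{4\veps^2}+\tfrac{c}{2\veps}$ by (b), so $f(S')\le f(S^*)-m\cdot\tfrac{c}{2\veps}$. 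Combined with $f(S')\ge f(S_{\mathrm{opt}})$ and $f(S^*)\le f(S_{\mathrm{opt}})+O(\gamma/\veps^3)$, this yields $m\le O(\gamma/(c\veps^2))$; choosing $\gamma$ sufficiently small relative to $c\alpha$ makes $m\le|\Lhigh|/5$, so $|B\cap\Lhigh|\ge\tfrac{4}{5}|\Lhigh|$. A union bound over the $\le 0.01$ sketch failure and $\le 0.03$ anti-concentration failure delivers the claimed success probability $\ge 0.96$. The main obstacle is the parameter trade-off: the sketch's additive error $O(\gamma/\veps^3)$ must be dominated by the savings $m\cdot\tfrac{c}{2\veps}$ even when $m=\Theta(|\Lhigh|)=\Theta(1/\veps^2)$, which forces $\gamma=\Theta(c\alpha)$ and thereby pins down the accuracy $\gamma\veps$ at which Alice must sketch.
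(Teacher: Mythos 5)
Your proposal is correct and follows essentially the same route as the paper: enumerate all half-size subsets $S\subseteq L(G_j)$, query the amplified ``for all'' sketch on the cuts $S\cup T$, use that the cut value is an affine function of $f(S)=\sum_{v\in S}\card{N(v)\cap T}$ (so only an additive error $O(\gamma/\veps^3)$ matters), and invoke binomial anti-concentration to show the extremal $S$ must capture most of $\Lhigh$. The only differences are cosmetic: you output the complement of the estimated minimizer and run a swap argument against the true optimum using a bound on the sample median, which forces $\gamma=\Theta(c^2)$, whereas the paper maximizes the (degree-corrected) estimate directly and uses the gap between the $0.45$ and $0.55$ order statistics (Claim~\ref{cl:binom}), yielding an absolute constant $\gamma$; both variants yield the same $\Omega(n/\veps^2)$ bound.
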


Before proving the lemma, let us show how to use it to decide 
about $\Delta(s_i,t)$ and derive the theorem.
We will need also the following simple claim, 
which we prove further below. %

\begin{claim} \label{cl:binom1}
With probability at least $0.98$, 
the relative size of $\Lhigh$ is 
$
  \frac{ \card{\Lhigh} }{ \card{L(G_j)} } \in [\half-10c,\half]
$.
\end{claim}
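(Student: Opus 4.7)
The plan is to condition on Bob's input $(i,t)$, write $|\Lhigh|$ as a sum of i.i.d.\ Bernoulli indicators (one per vertex of $L(G_j)\setminus\{i\}$), and combine anti-concentration of each summand with Hoeffding concentration of the sum. Under the problem's joint distribution, the strings $\{s_v\}_{v\in L(G_j)\setminus\{i\}}$ are mutually independent and each is uniform over length-$1/\veps^2$ binary strings of Hamming weight $1/(2\veps^2)$, independently of $(i,t)$. Viewing $s_v$ as the neighborhood $N(v)\subseteq R(G_j)$ and $t$ as the set $T\subseteq R(G_j)$, the indicators $X_v := \indic[v\in \Lhigh]$ for $v\neq i$ are i.i.d.\ Bernoulli$(p)$, where
\[
  p \;=\; \Pr\bigl[\,|N(v)\cap T| \geq \tfrac{1}{4\veps^2} + \tfrac{c}{\veps}\,\bigr],
\]
and $|N(v)\cap T|=:Z$ is the hypergeometric random variable obtained by drawing $1/(2\veps^2)$ elements (namely $N(v)$) without replacement from a population of size $1/\veps^2$ with $|T|=1/(2\veps^2)$ marked elements; it has mean $\mu=1/(4\veps^2)$ and standard deviation $\sigma\approx 1/(4\veps)$.

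Next I would pin down $p$ quantitatively. Writing $c/\veps = 4c\sigma$, the threshold lies only $4c$ standard deviations above the mean. Using the symmetry of the hypergeometric about $\mu$ (which holds because both classes have size $N/2$), one gets $\Pr[Z\geq\mu] = \tfrac12 + \tfrac12\Pr[Z=\mu]$, and therefore
\[
  p \;=\; \tfrac12 + \tfrac12\Pr[Z=\mu] \;-\; \sum_{k=0}^{c/\veps - 1}\Pr[Z=\mu+k].
\]
The hypergeometric PMF at any integer within a constant number of $\sigma$'s of $\mu$ is $\Theta(1/\sigma)=\Theta(\veps)$, so the sum above has $\Theta(c/\veps)$ terms of size $\Theta(\veps)$ and is therefore $\Theta(c)$. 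Hence $p = \tfrac12 - \Theta(c) + O(\veps)$; equivalently, Berry-Esseen for the hypergeometric gives $p = 1-\Phi(4c)+O(\veps)$. Combined with $\veps\leq c/10$ this yields $p \in [\tfrac12 - 3c,\ \tfrac12 - c/2]$ (with explicit absolute constants), and in particular $p$ is bounded away from $\tfrac12$ by $\Omega(c)$.

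Finally, I would apply Hoeffding's inequality to the sum of $|L(G_j)|-1 = 1/\veps^2 - 1$ independent Bernoulli$(p)$'s: with probability at least $0.98$ the sum deviates from its mean $p(|L(G_j)|-1)$ by at most $O(1/\veps)$. Dividing by $|L(G_j)|=1/\veps^2$ and absorbing the at-most-$1$ contribution of $X_i$ (a $\veps^2$-fraction of the total), the empirical fraction $|\Lhigh|/|L(G_j)|$ lies within $O(\veps)\leq c/10$ of $p$. Combined with the range for $p$, this puts the fraction inside $[\tfrac12 - 10c,\ \tfrac12]$ with the desired probability.

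The main obstacle, such as it is, is the quantitative two-sided anti-concentration $p = \tfrac12 - \Theta(c)$. The strict upper bound $|\Lhigh|/|L(G_j)| \leq \tfrac12$ survives the Hoeffding fluctuation only if $\tfrac12 - p = \Omega(c)$, and the lower bound $|\Lhigh|/|L(G_j)| \geq \tfrac12 - 10c$ holds only if $p \geq \tfrac12 - O(c)$; both follow cleanly from the standing assumption $\veps \leq c/10$ together with the symmetry/PMF argument (or Berry-Esseen) above.
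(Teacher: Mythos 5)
Your proof is correct and follows essentially the same route as the paper's: bound the per-vertex probability $\Pr[v\in \Lhigh]$ away from $\tfrac12$ by $\Theta(c)$ via anti-concentration near the mean (the paper invokes Berry--Esseen/binomial estimates, you use the symmetry and pointwise mass of the hypergeometric), then apply Hoeffding to the sum of independent indicators and use $\veps\le c/10$ to absorb the fluctuation. Your treatment is in fact slightly more careful than the paper's on two points it glosses over --- the intersection sizes are hypergeometric rather than exactly Binomial$(\tfrac{1}{\veps^2},\tfrac14)$, and the vertex $i$ must be excluded from the i.i.d.\ family and absorbed separately --- but these are refinements, not a different argument.
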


We assume henceforth that the events described in the above lemma and claim
indeed occur, which happens with probability at least $0.94$.
Notice that $\Delta(s_i,t) = \deg(i) + \card{T} - 2\card{N(i)\cap T}$.

Now suppose that $\Delta(s_i,t) \leq \tfrac{1}{2\veps^2} - \tfrac{c}{\veps}$.
Then $\card{N(i)\cap T} \geq \tfrac{1}{4\veps^2} + \tfrac{c}{2\veps}$, 
and because Bob's list $B$ is independent of the vertex $i\in L(G_j)$,
we have 
$\Pr[i\in B] 
 \geq \tfrac{4}{5}\card{\Lhigh} / \card{\Lhigh}
 = \tfrac{4}{5}
$.

Next, suppose that $\Delta(s_i,t) \geq \tfrac{1}{2\veps^2} + \tfrac{c}{\veps}$.
Then $\card{N(i)\cap T} \leq \tfrac{1}{4\veps^2} - \tfrac{c}{2\veps}$, 
and using Claim \ref{cl:binom1},
\[
  \Pr[i\in B] 
  \leq \frac{ \card{B} - \tfrac{4}{5}\card{\Lhigh} }{ \card{L(G_j)} } 
  \leq \frac{1}{4}.
\]
Thus, Bob can decide between the two cases 
with error probability at most $1/4$. 
Overall, it follows that Bob can solve the Gap-Hamming-Distance problem for $(s_i,t)$,
with overall error probability at most $1/4 + 0.06 < 1/3$,
as required to prove the theorem.

\bigskip
\begin{proof}[Proof of Claim \ref{cl:binom1}]
By basic properties of the Binomial distribution (or the Berry-Esseen Theorem),
there are absolute constants $\tfrac{1}{5}\leq K_1 \leq K_2\leq 5$ 
such that for each vertex $v\in L(G_j)$, 
\[
  \Pr\Big[v\in \Lhigh\Big] 
  = \Pr\Big[\card{N(v)\cap T} \geq \tfrac{1}{4\veps^2} + \tfrac{c}{\veps}\Big] 
  \in [\half - K_2 c, \half - K_1 c].
\]
Denoting $Z\eqdef \card{\Lhigh}$, we have by Hoeffding's inequality, 
\[
  \Pr\Big[ \abs{Z-\EX[Z]} > \tfrac{K_1 c}{\veps^2} \Big]
  \leq 2e^{-\half (K_1 c)^2 (1/\veps^2)}
  \leq 0.02.
\]
Thus, with probability at least $0.98$, we have both bounds
\begin{align*}
  Z &\leq \EX[Z] + \tfrac{K_1 c}{\veps^2} \leq \tfrac{1}{2\veps^2}, 
  \text{ and } \\
  Z &\geq \EX[Z] - \tfrac{K_1 c}{\veps^2} \leq \tfrac{1}{2\veps^2} 
     \geq (\half-2K_2c)\tfrac{1}{\veps^2} 
     \geq \tfrac{1}{4\veps^2}.
\end{align*}
\end{proof}

\bigskip
\begin{proof}[Proof of Lemma~\ref{lem:BobRecovers}]
We now show how Bob creates the ``list'' $B\subset L(G_j)$ of size 
$\card{B}=\tfrac{1}{2\veps^2}$. 
Bob estimates the cut value for $S\cup T$ 
for every subset $S\subseteq L(G_j)$ of size exactly $\frac{1}{2\veps^2}$. 
Observe that the cut value for a given $S$ is 
\[
  \delta(S\cup T)
  = \sum_{v\in S} \deg(v) + \sum_{u\in T} \deg(u) - 2 \sum_{v\in S} \card{N(v)\cap T}.
\]

The cut value is bounded by the number of edges in $G_j$, which is at most $1/\veps^4$, 
and since the data structure maintains all the cut values 
within factor $1+\gamma\veps$ for an arbitrarily small constant $\gamma>0$, 
the additive error on each cut value is at most $\gamma/\veps^3$
Further, we can assume Bob knows the exact degrees of all vertices 
(by adding them to the sketch, using $O(n\log\tfrac{1}{\veps})$ bits), 
which he can subtract off, and since scaling by $-1/2$ can only shrink the
additive error, we can define the ``normalized'' cut value
\[ 
  n(S,T) \eqdef \sum_{v\in S} \card{N(v)\cap T}, 
\]
which Bob can estimate within additive error $\gamma/\veps^3$. 
Bob's algorithm is to compute these estimates for all the values $n(S,T)$,
and output a set $S$ that maximizes his estimate for $n(S,T)$ 
as the desired list $B\subset L(G_j)$.

Let us now analyze the success probability of Bob's algorithm.
For each vertex $v\in L(G_j)$, let $f(v) \eqdef \card{N(v)\cap T}$. 
Observe that each $f(v)$ has a Binomial distribution 
$B(\tfrac{1}{\veps^2},\tfrac14)$, and they are independent of each other.
We will need the following bounds on the typical values of some order statistics
when taking multiple samples from such a Binomial distribution.
Recall that the \emph{$k$-th order statistic} of a sample (multiset) 
$x_1,\ldots,x_m\in\R$ is the $k$-th smallest element in that sample.
The following claim is proved further below. %

\begin{claim} \label{cl:binom}
Let $\aset{X_j}_{j=1,\ldots,m}$ be independent random variables
with Binomial distribution $B(t,\tfrac14)$.
Let $\alpha\in(0,\half)$ 
such that $(\half+\alpha)m$ is integral,
and both $t,m \geq 10/\alpha^2$.
Then
\begin{align*}
  \Pr&\Big[\text{the $(\half-\alpha)m$ order statistic of $\aset{X_j}$ 
is $\leq \tfrac14 t - \tfrac{\alpha}{10}\sqrt{t}$}\Big] \geq 0.99,
  \text{ and } \\
  \Pr&\Big[\text{the $(\half+\alpha)m$ order statistic of $\aset{X_j}$ 
is $\geq \tfrac14 t + \tfrac{\alpha}{10}\sqrt{t}$}\Big] \geq 0.99.
\end{align*}
\end{claim}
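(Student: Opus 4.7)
The plan is to reduce each order-statistic bound to a concentration inequality for the number of $X_j$'s that fall on the correct side of the threshold. The $(\half-\alpha)m$ order statistic of $\aset{X_j}$ is at most $\tau_-\eqdef \tfrac14 t - \tfrac{\alpha}{10}\sqrt{t}$ if and only if at least $(\half-\alpha)m$ of the $X_j$ lie in $(-\infty,\tau_-]$, and similarly the $(\half+\alpha)m$ order statistic exceeds $\tau_+\eqdef \tfrac14 t + \tfrac{\alpha}{10}\sqrt{t}$ if and only if at least $(\half-\alpha)m$ of the $X_j$ lie in $[\tau_+,\infty)$. Writing $Y_j \eqdef \indic[X_j\leq \tau_-]$ and $Y'_j \eqdef \indic[X_j\geq \tau_+]$, both $\sum_j Y_j$ and $\sum_j Y'_j$ are sums of i.i.d.\ Bernoulli variables with parameters $p_- \eqdef \Pr[X_j\leq \tau_-]$ and $p_+ \eqdef \Pr[X_j\geq \tau_+]$ respectively.

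The core quantitative step is to show $p_\pm \geq \half - C\alpha$ for some absolute constant $C$ strictly less than $1$. Each $X_j$ has mean $t/4$ and standard deviation $\sqrt{3t}/4$, so the threshold $\tau_-$ sits at $\tfrac{2\alpha}{5\sqrt{3}} \approx 0.23\alpha$ standard deviations below the mean. By the Berry--Esseen theorem applied to $X_j$ (viewed as a sum of $t$ Bernoulli$(1/4)$ summands), $\bigl|\Pr[X_j\leq \tau_-] - \Phi(-\tfrac{2\alpha}{5\sqrt{3}})\bigr| = O(1/\sqrt{t})$. Using $\Phi(-x) \geq \half - x/\sqrt{2\pi}$ for $x\geq 0$, together with the hypothesis $t\geq 10/\alpha^2$ which makes the Berry--Esseen error at most $O(\alpha/\sqrt{10})$, straightforward arithmetic with these absolute constants yields $p_- \geq \half - C\alpha$ with $C$ comfortably below $1/2$ (one can verify that $C\leq 1/4$ suffices). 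The bound on $p_+$ is identical by symmetry of the Gaussian approximation around the mean.

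With $C<1$ in hand, the desired order-statistic bound becomes a standard Hoeffding tail bound. Indeed $\EX\bigl[\sum_j Y_j\bigr] = m p_- \geq (\half - C\alpha)m$, so Hoeffding's inequality gives
\[
  \Pr\Bigl[\,\sum_j Y_j < (\half-\alpha)m\,\Bigr]
  \leq \exp\bigl(-2(1-C)^2 \alpha^2 m\bigr)
  \leq \exp\bigl(-20(1-C)^2\bigr),
\]
where the last step uses $m\geq 10/\alpha^2$. For any $C$ bounded away from $1$ this is well below $0.01$ (for $C\leq 1/4$ it is roughly $e^{-11}$). The same estimate applied to $\sum_j Y'_j$ yields the other half of the claim.

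The main obstacle is simply keeping track of the absolute constants so that the Berry--Esseen slack, the $\tfrac{1}{10}$ factor in the threshold, and the $10/\alpha^2$ lower bound on $t$ and $m$ all line up to deliver success probability at least $0.99$. An alternative that sidesteps Berry--Esseen is a direct local-limit computation for the binomial in the near-central regime (where $\tau_\pm$ lies within a constant number of standard deviations of the mean), which in this small-deviation window gives essentially the same constants and is arguably cleaner to execute.
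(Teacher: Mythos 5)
Your proof is correct and takes essentially the same route as the paper's: reduce the order-statistic event to the count $\sum_j \indic_{\{X_j\le \frac14 t-\frac{\alpha}{10}\sqrt t\}}$, lower-bound the per-variable probability by $\tfrac12 - C\alpha$ with $C$ bounded below $1$, and finish with Hoeffding using $m\ge 10/\alpha^2$; the paper gets the per-variable bound via the binomial mean--median inequality plus a bound on the mass of the window $(\tfrac14 t-\tfrac{\alpha}{10}\sqrt t,\ \tfrac14 t]$, but explicitly offers Berry--Esseen as an alternative, which is exactly what you do. The only quibble is numerical: with the standard Berry--Esseen constant your $C$ comes out near $0.3$ rather than the claimed $\le 1/4$, which is immaterial since any $C\le \tfrac12$ already gives $e^{-20(1-C)^2}\le e^{-5}<0.01$.
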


Sort the vertices $v\in L(G_j)$ by their $f(v)$ value, 
and denote them by $v_1,\ldots,v_{1/\veps^2}$ such that $f(v_i) \leq f(v_{i+1})$.
Applying the claim (for $\alpha=0.05$ and $t,m=\tfrac{1}{\veps^2}$), 
we see that with probability at least $0.98$, the difference 
\begin{equation} \label{eq:fdiff}
  f(v_{0.55/\veps^2})-f(v_{0.45/\veps^2}) 
  \geq 0.01 / {\veps}.
\end{equation}
We assume henceforth this event indeed occurs.
Let $S^*$ include the $\frac{1}{2\veps^2}$ vertices $v\in L(G_j)$ 
with largest $f(v)$, i.e., $S^*\eqdef \aset{v_j}_{j > 0.5/\veps^2}$,
and let $S'\subset L(G_j)$ be any subset of the same size such that 
at least $\tfrac{1}{10}$-fraction of its vertices are not included in $S^*$
(i.e., their order statistic in $L(G_j)$ is at most $\tfrac{1}{2\veps^2}$).
Then we can write
\begin{align*}
  n(S^*,T)
  &= \sum_{j\in S^*} f(v) 
   = \sum_{j>0.5/\veps^2} f(v_j),
  \\
  n(S',T) 
  &= \sum_{j\in S'} f(v) 
   \leq \sum_{j>0.6/\veps^2} f(v_j) + \sum_{0.4/\veps^2 < j\leq 0.5/\veps^2} f(v_j).
\end{align*}
Now subtract them
\begin{align*}
  n(S',T) - n(S^*,T)
  &= \sum_{0.5/\veps^2 < j\leq 0.6/\veps^2} f(v_j)
    - \sum_{0.4/\veps^2 < j\leq 0.5/\veps^2} f(v_j),
\intertext{observe that elements in the normalized interval $(0.5,0.55]$ 
dominate those in $(0.45,0.5]$,
}
  &\geq \sum_{0.55/\veps^2 < j\leq 0.6/\veps^2} f(v_j)
    - \sum_{0.4/\veps^2 < j\leq 0.45/\veps^2} f(v_j)
\intertext{and bound the remaining elements using \eqref{eq:fdiff}, }
  &\geq (0.05/\veps^2)  \big[ f(v_{0.55/\veps^2})-f(v_{0.45/\veps^2}) \big]
  \geq 0.0005/\veps^3.
\end{align*}
Bob's estimate for each of the values $n(S^*,T)$ and $n(S',T)$ 
has additive error at most $\gamma/\veps^3$, 
and therefore for suitable $\gamma=10^{-4}$, 
the list $B$ Bob computes cannot be this set $S'$.
Thus, Bob's list $B$ must contain at least $9/10$-fraction of $S^*$,
i.e., the $\frac{1}{2\veps^2}$ vertices $v\in L(G_j)$ with highest $f(v)$.

Recall from Claim~\ref{cl:binom1} that with probability at least $0.98$, 
we have $\tfrac{1}{4\veps^2} \leq \card{\Lhigh} \leq \tfrac{1}{2\veps^2}$,
and assume henceforth this event occurs.
Since $S^*$ includes the $\tfrac{1}{2\veps^2}$ vertices with highest $f$-value,
it must contain all the vertices of $\Lhigh$,
i.e., $\Lhigh \subseteq S^*$.
We already argued that Bob's list $B$ contains all 
but at most $\tfrac{1}{10}\card{S^*} = \tfrac{1}{20\veps^2}$ vertices of $S^*$,
and thus 
\[
  \frac{\card{\Lhigh\setminus B}} {\card{\Lhigh}}
  \leq \frac{\card{S^*\setminus B}} {\card{\Lhigh}}
  \leq \frac{\ \tfrac{1}{20\veps^2}\ } {\tfrac{1}{4\veps^2} }
  = \frac{1}{5}.
\]
This bound holds with probability at least $0.96$
(because of two events that we ignored, each having probability at most $0.02$)
and this proves Lemma~\ref{lem:BobRecovers}.
\end{proof}

\bigskip
\begin{proof}[Proof of Claim \ref{cl:binom}]
The $(\half-\alpha)m$ order statistic of $\aset{X_j}$ 
is smaller or equal to $T\eqdef \tfrac14 t - \tfrac{\alpha}{10}\sqrt{t}$
if and only if at least $(\half-\alpha)m$ elements are smaller or equal to $T$, 
which can be written as $\sum_{j=1}^m \indic_{\aset{X_j\le T}} \geq (\half-\alpha)m$.

Now fix $j\in\aset{1,\ldots,t}$. 
Then 
\begin{equation} \label{eq:binom1}
  \Pr[X_j\leq T] = \Pr[X_j \leq \tfrac14 t] \cdot \Pr[X_j \leq T \mid X_j \leq \tfrac14 t],
\end{equation}
and by the Binomial distribution's relationship between mean and median,
$\Pr[X_j \leq \tfrac14 t] \geq \half$.
Elementary but tedious calculations (or the Berry-Esseen Theorem) show 
there is an absolute constant $K\in(0,5)$ such that 
\[
  \Pr\Big[\tfrac14 t - \tfrac{\alpha}{10}\sqrt{t} < X_j \leq \tfrac14 t\Big]
  \leq K \tfrac{\alpha}{10}\cdot \Pr\Big[X_j \leq \tfrac14 t\Big],
\]
and plugging into \eqref{eq:binom1}, we obtain 
$\Pr[X_j\leq T] \geq \half(1-K\tfrac{\alpha}{10}) \geq \half - \half\alpha$.

Now bound the expectation by
$\EX[\sum_{j=1}^m \indic_{\aset{X_j\le T}}] \geq (\half -\half\alpha)m$,
and apply Hoeffding's inequality, 
\[
  \Pr \Big[\sum_j \indic_{\aset{X_j\le T}} < (\half-\alpha)m \Big]
  \leq e^{-\half (\half \alpha)^2 m}
  = e^{-\alpha^2 m/8}
  \leq 0.01,
\]
where the last inequality follows since $\alpha^2m$ is sufficiently large.
\end{proof}

\subsection{The communication lower bound}
\label{sec:CC}

We now prove Theorem \ref{thm:GHD} (see Theorem \ref{thm:GHD2} below),
by considering distributional communication problems
between two parties, Alice and Bob, as defined below. 
We restrict attention to the one-way model, in which Alice sends to Bob
a single message $M$ that is a randomized function of her input
(using her private randomness), and Bob outputs the answer.

\paragraph{Distributional versions of Gap-Hamming-Distance.}
Recall that our analysis is asymptotic for $\veps$ tending to $0$, and 
let $0 < c < 1$ be a parameter, considered to be a sufficiently small constant.
Alice's input is $S \in \{0,1\}^{\frac{1}{\eps^2}}$,
Bob's input is $T \in \{0,1\}^{\frac{1}{\eps^2}}$,
where the Hamming weights are $\wt(S) = \wt(T) = \frac{1}{2\eps^2}$, and Bob 
needs to evaluate the partial function
\[
  f_{c}(S,T) = 
  \begin{cases}
    1& \text{if $\Delta(S,T) \geq \frac{1}{2\eps^2} + \frac{c}{\eps}$; }\\
    0& \text{if $\Delta(S,T) \leq \frac{1}{2\eps^2} - \frac{c}{\eps}$. }
  \end{cases}
\]
The distribution $\mu$ we place on the inputs $(S,T)$ is the following: 
$S$ is chosen uniformly at random with $\wt(S) = \frac{1}{2\eps^2}$,
and then with probability $\frac{1}{2}$, we choose $T$ uniformly at random with $\wt(T) = \frac{1}{2\eps^2}$
subject to the constraint that $\Delta(S,T) \geq \frac{1}{2\eps^2} + \frac{c}{\eps}$, 
while with the remaining probability $\frac{1}{2}$, we choose $T$
uniformly at random with $\wt(T) = \frac{1}{2\eps^2}$ 
subject to the constraint that $\Delta(S,T) \leq \frac{1}{2\eps^2} - \frac{c}{\eps}$. 
We say Alice's message $M = M(S)$ is $\delta$-error for $(f_{c}, \mu)$ if 
Bob has a reconstruction function $R$ for which
$$\Pr_{(S,T) \sim \mu, \textrm{ private randomness}}[R(M, T) = f_{c}(S,T)] \geq 1-\delta.$$

Now consider a related but different distributional problem. 
Alice and Bob have $S, T \in \{0,1\}^{\frac{1}{\eps^2}}$, respectively,
each of Hamming weight exactly $\frac{1}{2\eps^2}$, 
and Bob needs to evaluate the function
\[
  g(S,T) = 
  \begin{cases}
    1& \text{if $\Delta(S,T) >    \frac{1}{2\eps^2}$; }\\
    0& \text{if $\Delta(S,T) \leq \frac{1}{2\eps^2}$. }
  \end{cases}
\]
We place the following distribution $\zeta$ on the inputs $(S,T)$: 
$S$ and $T$ are chosen independently and uniformly at random
among all vectors with Hamming weight exactly $\frac{1}{2\eps^2}$. 
We say a message $M$ is $\delta$-error for $(g, \zeta)$ if Bob has a reconstruction function $R$ for which
$$\Pr_{(S, T) \sim \zeta, \ \textrm{private randomness}}[R(M, T) = g(S,T)] \geq 1-\delta.$$

Let $I(S ; M) = H(S) - H(S | M)$ be the mutual information between $S$ and $M$, where $H$ is the entropy
function. Define
$\IC_{\mu, \delta}(f_c) \eqdef \min_{\text{$\delta$-error $M$ for $(f_c, \mu)$}} I(S ; M)$ 
and 
$\IC_{\zeta, \delta}(g) \eqdef \min_{\text{$\delta$-error $M$ for $(g, \zeta)$}} I(S ; M)$. 

\begin{lemma}\label{lem:icost}
For all $\delta>0$, 
 $\IC_{\mu, \delta}(f_c) 
\geq \IC_{\zeta, \delta + O(c)}(g)$. 
\end{lemma}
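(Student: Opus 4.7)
The strategy is to show that any $\delta$-error one-way protocol $M=M(S)$ for $(f_c,\mu)$ is simultaneously a $(\delta+O(c))$-error protocol for $(g,\zeta)$, while having the same information cost $I(S;M)$ under the two distributions. Taking the infimum over protocols then yields $\IC_{\zeta,\delta+O(c)}(g)\le \IC_{\mu,\delta}(f_c)$, as claimed.

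The central observation is that $\mu$ coincides with $\zeta$ conditioned on the ``gap'' event
\[
E \eqdef \bigl\{\Delta(S,T)\ge \tfrac{1}{2\eps^2}+\tfrac{c}{\eps}\bigr\}\cup\bigl\{\Delta(S,T)\le \tfrac{1}{2\eps^2}-\tfrac{c}{\eps}\bigr\}.
\]
Indeed, under $\zeta$, for any fixed $S$ of weight $\tfrac{1}{2\eps^2}$, the random variable $|S\cap T|=\tfrac{1}{2}(\tfrac{1}{\eps^2}-\Delta(S,T))$ is hypergeometric with parameters $(\tfrac{1}{\eps^2},\tfrac{1}{2\eps^2},\tfrac{1}{2\eps^2})$, which is exactly symmetric around its mean $\tfrac{1}{4\eps^2}$ because $|S|=|T|$. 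Hence the numbers of $T$'s in the two branches of $E$ are equal for every $S$, so $\zeta\mid E$ places probability $\tfrac12$ on each branch and is uniform within each branch — precisely the recipe for $\mu$. Consequently the marginal of $S$ under both $\mu$ and $\zeta$ is uniform on weight-$\tfrac{1}{2\eps^2}$ strings, so the joint law of $(S,M)$ is identical and $I(S;M)$ takes the same value under both distributions.

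The second ingredient is an anti-concentration bound: $\Pr_\zeta[E^c]=O(c)$. Since $|S\cap T|$ is hypergeometric with mean $\tfrac{1}{4\eps^2}$ and standard deviation $\Theta(1/\eps)$, the Berry--Esseen theorem (or a direct local-limit computation analogous to the one used in Claim~\ref{cl:binom1}) bounds the probability that $|S\cap T|$ falls within the window of half-width $\tfrac{c}{2\eps}$ around its mean by $O(c)$.

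Now let $R$ be Bob's reconstruction for $(f_c,\mu)$ with error $\le\delta$, and use the very same $R$ for $(g,\zeta)$. On $E$ we have $f_c(S,T)=g(S,T)$, so
\[
\Pr_{\zeta}\bigl[R(M,T)\ne g(S,T)\bigr]
 \le \Pr_{\zeta}[E^c]+\Pr_{\zeta}[E]\cdot \Pr_{\mu}\bigl[R(M,T)\ne f_c(S,T)\bigr]
 \le O(c)+\delta,
\]
which establishes the lemma. The only delicate step is the exact identity $\mu=\zeta\mid E$; this is where the argument would fail without the $|S|=|T|$ symmetry, and it is what allows us to transport $M$ without any adjustment to the information cost.
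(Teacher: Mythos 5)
Your proof is correct and takes essentially the same route as the paper's: interpret the $\delta$-error protocol for $(f_c,\mu)$ as a protocol for $(g,\zeta)$, use that $\zeta$ conditioned on the gap event equals $\mu$ (the paper asserts this without the symmetry justification you supply), that the gap event has probability $1-O(c)$, and that $f_c=g$ on it, with the information cost unchanged since $S$ has the same uniform marginal under both distributions. One small nitpick: the exact symmetry of the hypergeometric law of $\card{S\cap T}$ holds because the weight $\tfrac{1}{2\eps^2}$ is half the dimension $\tfrac{1}{\eps^2}$ (i.e., $K=N/2$), not merely because $\card{S}=\card{T}$, though in this setting both facts hold and your conclusion stands.
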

\begin{proof}
It suffices to show that if 
$M$ is $\delta$-error for $(f_c, \mu)$, then $M$ is $(\delta + O(c))$-error for $(g, \zeta)$.
Since $M$ is $\delta$-error for $(f_c, \mu)$, Bob has a reconstruction function $R$ for which
$$\Pr_{(S,T) \sim \mu, \textrm{ private randomness}}[R(M, T) = f_{c}(S,T)] \geq 1-\delta.$$ Now consider
$\Pr_{(S,T) \sim \zeta, \textrm{ private randomness}}[R(M, T) = g(S,T)]$. Observe that whenever $(S,T)$ lies in the support of $\mu$, if
$R(M,T) = f_c(S,T)$, then $R(M,T) = g(S,T)$. The probability that $(S,T)$ lies in the support of $\mu$
is $1-O(c)$, by standard anti-concentration arguments of the Binomial distribution (or the Berry-Esseen Theorem), 
and conditioned on this event we have that $(S,T)$ is distributed according to $\mu$. Hence,
$\Pr_{(S,T) \sim \zeta, \ \textrm{private randomness}}[R(M, T) = g(S,T)] 
  \geq [1 - O(c)][1- \delta]
  \geq 1 - O(c) - \delta
  $. 
\end{proof}
We now lower bound $\IC_{\zeta, \delta}(g)$. 
\begin{lemma}\label{lem:icost2}
For $\delta_0 > 0$ a sufficiently small constant, $IC_{\zeta, \delta_0}(g) = \Omega \left (\frac{1}{\eps^2} \right ).$
\end{lemma}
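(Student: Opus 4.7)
The plan is to prove the $\Omega(1/\veps^2)$ information-cost lower bound by reducing to a well-known Gap-Hamming-Distance (GHD) bound. Write $N = 1/\veps^2$. The starting point is the statement that any one-way protocol where Alice sends a message $M$ based on $x \in \zo^N$ and Bob decides whether $\Delta(x,y) > N/2 + c'\sqrt{N}$ or $\Delta(x,y) < N/2 - c'\sqrt{N}$, for $x, y$ iid uniform, has external information cost $I(x;M) = \Omega(N)$. This is the standard information-theoretic form of the GHD lower bound, established via the Hellinger/corruption techniques of Chakrabarti--Regev (and Vidick gives a one-way IC version directly).

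Given a $\delta_0$-error message $M$ for $(g,\zeta)$, I would construct a GHD protocol as follows. Alice and Bob share public randomness $R$ and use it to implement, for each weight $k$, a randomized ``balancing'' map $f_r$ that sends a uniform weight-$k$ string to a uniform weight-$N/2$ string: Alice, observing $|x|=k$, samples via $R$ a uniformly random subset of $\abs{k-N/2}$ positions on the majority side of $x$ and flips them to obtain $S$; Bob analogously builds $T$ from $y$. By design (after handling minor integrality corrections) the map is uniform-to-uniform conditioned on $R$, so $(S,T)\sim\zeta$. Alice then runs $M$ on $S$ and sends the bits to Bob, who outputs $M$'s decision. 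For correctness: by Hoeffding, $\abs{|x|-N/2},\abs{|y|-N/2}\leq C_0\sqrt N$ with probability $1-o(1)$, and on this event the balancing changes $\Delta$ by at most $2C_0\sqrt N$; taking $c'=3C_0$ makes the sign of $\Delta(S,T)-N/2$ match the GHD answer, so the composite protocol has error $\delta_0+o(1)<1/3$ for sufficiently small $\delta_0$.

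For the information-cost transfer: conditioning on $R=r$, the map $S=f_r(x)$ is deterministic and, by design, sends uniform $x$ to $\zeta$-marginal $S$. Hence data processing gives $I(x;M\mid R=r) \geq I(f_r(x);M)$, and since $f_r(x)\sim\zeta$ for every $r$, the right-hand side equals $I(S;M) = \IC_{\zeta,\delta_0}(g)$. Averaging over $r$, the external IC of the reduced protocol is $\geq \IC_{\zeta,\delta_0}(g)$, and the GHD lower bound then forces $\IC_{\zeta,\delta_0}(g) = \Omega(N) = \Omega(1/\veps^2)$.

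The main obstacle is the design of the balancing map $f_r$: it must be uniform-to-uniform \emph{for every} $r$, not merely on average, since the data-processing step happens under conditioning on $r$. Such designs exist by specifying, for each weight $k$, a random uniform partition of weight-$k$ strings into preimage classes of the $\binom{N}{N/2}$ balanced strings (with a minor rounding argument when $\binom{N}{k}/\binom{N}{N/2}$ is non-integral), but the bookkeeping is delicate. An alternative route, which avoids this issue entirely, is a direct Hellinger-distance argument on $(g,\zeta)$ in the spirit of Bar-Yossef--Jayram--Kumar--Sivakumar or Chakrabarti--Regev: one shows that for most pairs of balanced strings $S,S'$ differing by a single swap, the induced message distributions are separated by a constant Hellinger distance, and summing over the $\Theta(N)$ independent swap directions yields $I(S;M)=\Omega(N)$. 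Either route succeeds, and the paper likely follows whichever is cleaner to present.
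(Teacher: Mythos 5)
Your overall strategy is the same as the paper's: reduce the balanced-weight problem $(g,\zeta)$ to Gap-Hamming under iid uniform inputs by randomly flipping majority-side coordinates to fix the weight at exactly $\tfrac{1}{2\veps^2}$, and then invoke a known information-cost lower bound for uniform GHD (the paper uses the Braverman--Garg--Pankratov--Weinstein information-complexity bound). Two remarks on your transfer step. First, your data-processing inequality is stated backwards: for the chain $x \to f_r(x) \to M$, DPI gives $I(x;M \mid R=r) \le I(f_r(x);M)$, and it is exactly this direction (the composed protocol reveals no more about $x$ than $M$ reveals about a $\zeta$-distributed input) that the final implication requires; with ``$\ge$'' as written, the conclusion does not follow. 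Second, your ``uniform-to-uniform for every $r$'' obstacle is self-imposed: if each party rebalances with \emph{private} randomness (as the paper does), the rebalanced string is exactly uniform over balanced strings, the chain $x\to S\to M$ holds unconditionally, and no public-coin bookkeeping is needed. (Interestingly, the paper does not even use this one-line DPI; it conditions on a weight-concentration indicator and pays an $O(\veps^{-3/2}\log(1/\veps))$ additive loss via a chain rule over the flipped positions, so your corrected transfer step would actually be cleaner.)

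The genuine gap is in the correctness accounting. Writing $N=1/\veps^2$, you bound the change in $\Delta$ caused by rebalancing by the number of flips, $2C_0\sqrt N$, and compensate by enlarging the GHD gap to $c'\sqrt N$ with $c'=3C_0$. But weight concentration forces $C_0$ to be at least a moderate constant (Hoeffding failure $2e^{-2C_0^2}$), so $c'\ge 3$, and under iid uniform inputs the promise $\abs{\Delta(x,y)-N/2}>c'\sqrt N$ holds only with probability exponentially small in $(c')^2$ --- vastly smaller than your failure probabilities $\delta_0+o(1)$. Then the composed protocol cannot be shown to meet the hypotheses of any non-vacuous distributional GHD bound: if error is measured as $\Pr[\text{promise}\wedge\text{wrong}]$, the trivial empty protocol already satisfies it; if error is measured conditionally on the promise, conditioning on this tiny event can inflate $\delta_0$ and the Hoeffding tails uncontrollably, since your $(g,\zeta)$ protocol is only guaranteed on average under $\zeta$, not under the conditioned distribution. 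Tuning constants cannot rescue this: a non-negligible promise mass needs $c'\lesssim 1$, concentration needs $C_0\gtrsim 1$, and your shift bound is $2C_0\sqrt N > c'\sqrt N$. The missing idea --- the paper's key observation --- is that the shift is typically far smaller than the number of flips: conditioned on the flip sets, the other party's bits there are uniform, so the flips contribute i.i.d.\ $\pm1$ changes and the net shift is $O(\sqrt{\#\text{flips}})=O(N^{1/4})$ with high probability, negligible on the $\sqrt N$ scale. With that, the reduction preserves the value of the \emph{total} threshold function $g$ under the uniform distribution $\eta$ with probability $1-\gamma$ for an arbitrarily small constant $\gamma$, and one invokes the GHD bound at gap $\sqrt N$ via the trivial fact that a protocol for $g$ is a protocol for the gap version. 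Your closing suggestion of a direct Hellinger argument on $(g,\zeta)$ is a plausible alternative but is not what the paper does and is left unexecuted.
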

\begin{proof}
We use the following lower bound of 
Braverman, Garg, Pankratov and Weinstein \cite{BGPW13} 
for the following $h_c(S,T)$ problem. Like before, Alice
has $S \in \{0,1\}^{\frac{1}{\eps^2}}$, Bob has $T \in \{0,1\}^{\frac{1}{\eps^2}}$, 
and needs to evaluate the partial function
\[
  h_{c}(S,T) = 
  \begin{cases}
    1& \text{if $\Delta(S,T) \geq \frac{1}{2\eps^2} + \frac{c}{\eps}$; }\\
    0& \text{if $\Delta(S,T) \leq \frac{1}{2\eps^2} - \frac{c}{\eps}$. }
  \end{cases}
\]
However, now $\wt(S)$ and $\wt(T)$ may be arbitrary. 
Moreover, $S$ and $T$ are chosen independently and uniformly at random from $\{0,1\}^{\frac{1}{\eps^2}}$. Denote
this by $(S,T) \sim \eta$. 
Now it may be the case that $\left|\Delta(S,T) - \frac{1}{2\eps^2}\right| < \frac{c}{\eps}$,
in which case Bob's output is allowed to be arbitrary. A message $M$ is $\delta$-error for $(h_c, \eta)$ 
if Bob has a reconstruction
function $R$ for which
$$\Pr_{(S,T) \sim \eta, \textrm{ private randomness}} \left [\left (R(M,T) = h_c(S,T) \right )\wedge \left (\left| \Delta(S,T) - \frac{1}{2\eps^2} \right| \geq \frac{c}{\eps} \right )\right ] \geq 1-\delta.$$
It was proved in \cite{BGPW13} that for a sufficiently small constant $\delta > 0$, 
$$\IC_{\eta, \delta}(h_1) \eqdef \min_{\text{$\delta$-error $M$ for $(h_1,\eta)$}} I(S ; M) \geq \frac{C}{\eps^2},$$ for an absolute
constant $C > 0$. We show how to apply this result to prove the lemma.

An immediate corollary of this result is that 
$\IC_{\eta, \delta}(g) \eqdef \min_{\text{$\delta$-error $M$ for $(g,\eta)$}} I(S ; M) \geq \frac{C}{\eps^2}$. 
Indeed, if $M$ is $\delta$-error for $(g, \eta)$, then it is also $\delta$-error
for $(h_1, \eta)$. 

Now let $M$ be a $\delta$-error protocol for $(g, \zeta)$. Consider the following randomized protocol $M'$
for $g$ with inputs distributed according to $\eta$. 
Given $S$, Alice computes $s =$ wt($S$). If $s < \frac{1}{2\eps^2}$, Alice randomly chooses
$\frac{1}{2\eps^2} - s$ coordinates in $S$ that are equal to $0$ and replaces them with a $1$, otherwise she randomly
chooses $s - \frac{1}{2\eps^2}$ coordinates in $S$ that are equal to $1$ and replaces them with a $0$. Let $S'$ be
the resulting vector. Alice sends $M(S')$ to Bob, i.e., $M'(S)\eqdef M(S')$. 
Given the message $M(S')$ and his input $T$, Bob first computes $t = \wt(T)$. 
If $t < \frac{1}{2\eps^2}$,
Bob randomly chooses $\frac{1}{2\eps^2} - t$ coordinates in $T$ which are equal to $0$ and replaces them with a $1$, otherwise
he randomly chooses $t-\frac{1}{2\eps^2}$ coordinates in $T$ which are equal to $1$ and replaces them with a $0$. Let $T'$
be the resulting vector. Suppose $R$ is such that 
$\Pr_{(S', T') \sim \zeta, \textrm{ private randomness}} [R(M(S'), T') = g(S', T')] \geq 1-\delta.$ Bob outputs $R(M(S'),T')$.

We now lower bound $\Pr[g(S', T') = g(S, T)]$, where the probability is over $(S,T) \sim \eta$ and the random choices of Alice
and Bob for creating $S', T'$ from $S,T$, respectively. First, the number of coordinates changed by Alice or Bob
is $r = \Theta(1/\eps)$ with arbitrarily large constant probability. Since $S$ and $T$ are independent and uniformly random, 
after performing this change, the Hamming distance on these $r$ coordinates is $\frac{r}{2} \pm O(\sqrt{r})$ with
arbitrarily large constant probability. Finally, $\left|\Delta(S', T') - \frac{1}{2\eps^2} \right| = \omega(\sqrt{r})$
with arbitrarily large constant probability. Hence, with arbitrarily large constant probability, $g(S',T') = g(S,T)$.
It follows that $\Pr[g(S', T') = g(S, T)] \geq 1-\gamma$ for an arbitrarily small constant $\gamma > 0$, and
therefore if $R'$ describes the above reconstruction procedure of Bob, then
$\Pr_{(S, T) \sim \eta, \textrm{ private randomness}}[R'(M'(S),T) = g(S, T)] \geq 1-\gamma - \delta$. 

Hence, $M'$ is a $(\delta+\gamma)$-error protocol for $(g, \eta)$. We now bound $I(M' ; S)$ in terms of $I(M ; S')$. 
Let $J$ be an indicator random variable for the event 
$\wt(S)\in \left [\frac{1}{2\eps^2}-\frac{1}{\eps^{3/2}}, \frac{1}{2\eps^2} + \frac{1}{\eps^{3/2}} \right ]$. 
Then $\Pr[J = 1] = 1-o(1)$,
where $o(1) \rightarrow 0$ as $\eps \rightarrow 0$. Since conditioning on a random variable $Z$ can change the mutual
information by at most $H(Z)$, we have 
\begin{eqnarray}\label{eqn:derive1}
I(M' ; S) \leq I(M' ; S \mid J) + H(J) \leq I(M' ; S \mid J = 1) + 1.
\end{eqnarray}
$S$ is a probabilistic function of $S'$, which if $J = 1$, is obtained by changing at most $1/\eps^{3/2}$ randomly
chosen coordinates $A_1, \ldots, A_{1/\eps^{3/2}}$ of $S'$ from $0$ to $1$ or from $1$ to $0$. By the data processing inequality
and the chain rule for mutual information, 
\begin{eqnarray}\label{eqn:derive2}
I(M' ; S \mid J = 1) & \leq & I(M' ; S', A_1, \ldots, A_{1/\eps^{3/2}} \mid J = 1) \nonumber \\
& = & I(M' ; S' \mid J = 1) + \sum_{\ell=1}^{1/\eps^{3/2}} I(M' ; A_{\ell} \mid J = 1, A_1, \ldots, A_{\ell-1}) \nonumber \\
& \leq & I(M' ; S' \mid J = 1) + O \left (\frac{\log(1/\eps)}{\eps^{3/2}} \right ).
\end{eqnarray}
Observe that the joint distribution of $M'(S')$ and $S'$ is independent of $J$, and moreover is equal to the joint distribution
of $M(S')$ and $S' \sim \zeta$. We can take $M$ to be a $\delta$-error protocol for $(g, \zeta)$ for which
$I(M(S') ; S') = \IC_{\zeta, \delta}(g)$. Combining this with (\ref{eqn:derive1}) and (\ref{eqn:derive2}), 
$I(M' ; S ) \leq \IC_{\zeta, \delta}(g) + O \left (\frac{\log(1/\eps)}{\eps^{3/2}} \right ).$
Now since $M'$ is a $(\delta+\gamma)$-error protocol for $(g, \eta)$, we have $I(M' ; S) \geq \IC_{\eta, \delta+\gamma}(g) \geq  \frac{C}{\eps^2}$,
provided $\delta$ and $\gamma$ are sufficiently small constants. It follows that
$ \IC_{\zeta, \delta}(G) 
  \geq \frac{C}{\eps^2} - O \left (\frac{\log(1/\eps)}{\eps^{3/2}} \right )
  \geq \frac{C}{2\eps^2}
$, as desired. 
\end{proof}
\begin{corollary}\label{cor:prim}
For sufficiently small constants $\delta, c > 0$,  $\IC_{\mu, \delta}(f_c) = \Omega(1/\eps^2)$. 
\end{corollary}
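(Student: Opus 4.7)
The plan is a direct two-step composition of Lemma~\ref{lem:icost} and Lemma~\ref{lem:icost2}. Lemma~\ref{lem:icost} hands us the inequality
\[
\IC_{\mu,\delta}(f_c) \;\geq\; \IC_{\zeta,\delta+O(c)}(g),
\]
while Lemma~\ref{lem:icost2} provides a fixed constant $\delta_0>0$ such that $\IC_{\zeta,\delta_0}(g) = \Omega(1/\eps^2)$. So the whole argument reduces to choosing the constants $\delta$ and $c$ small enough that the error parameter produced by Lemma~\ref{lem:icost} is at most $\delta_0$.

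Concretely, I would fix the absolute constant $\delta_0$ furnished by Lemma~\ref{lem:icost2}, then pin down the hidden constant in the $O(c)$ term of Lemma~\ref{lem:icost} — say $O(c)\le Kc$ for some absolute $K$ — and select $c < \delta_0/(2K)$ and $\delta \le \delta_0/2$. With these choices, $\delta + O(c) \le \delta_0$.

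The only additional observation needed is monotonicity of information cost in the error parameter: if $\delta' \le \delta_0$, then every $\delta'$-error protocol is automatically a $\delta_0$-error protocol, so the minimization defining $\IC_{\zeta,\delta'}(g)$ is over a subset of those defining $\IC_{\zeta,\delta_0}(g)$, and hence $\IC_{\zeta,\delta'}(g) \ge \IC_{\zeta,\delta_0}(g)$. Applying this with $\delta' = \delta+O(c)$, we chain
\[
\IC_{\mu,\delta}(f_c) \;\ge\; \IC_{\zeta,\delta+O(c)}(g) \;\ge\; \IC_{\zeta,\delta_0}(g) \;=\; \Omega(1/\eps^2),
\]
which is precisely the corollary.

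There is no real obstacle here — all the heavy lifting (the reduction from the promise-weight problem $f_c$ to the unconstrained-weight problem $g$, and the reduction from $g$ under $\eta$ to $g$ under $\zeta$ via the Braverman--Garg--Pankratov--Weinstein bound) was already done in Lemmas~\ref{lem:icost} and \ref{lem:icost2}. The only thing to be careful about is bookkeeping: making sure that the ``sufficiently small constants'' in the corollary's statement are chosen \emph{after} fixing $\delta_0$ and the implicit constant in $O(c)$, so that all three error budgets are compatible. With that ordering of quantifiers made explicit, the proof is a one-line chain of inequalities.
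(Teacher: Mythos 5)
Your proposal is correct and matches the paper's own proof, which simply combines Lemma~\ref{lem:icost} and Lemma~\ref{lem:icost2}; your additional remarks on monotonicity of $\IC$ in the error parameter and on fixing $\delta_0$ before choosing $\delta$ and $c$ only make explicit what the paper leaves implicit.
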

\begin{proof}
This follows by combining Lemmas \ref{lem:icost} and \ref{lem:icost2}. 
\end{proof}

\paragraph{$n$-fold version of Gap-Hamming-Distance.}
We now consider the $n$-fold problem in which Alice is given $n$ strings
$S_1, \ldots, S_n \in \{0,1\}^{1/\eps^2}$, and Bob has an index
$I \in [n]$ together with one string $T \in \{0,1\}^{1/\eps^2}$. Here $(S_I, T) \sim \zeta$, while $S_j$ for $j \neq I$, are
chosen independently and uniformly at random from all Hamming weight $\frac{1}{2\eps^2}$ vectors. 
Thus the joint distribution of $S_1, \ldots, S_n$ 
is $n$ i.i.d. strings drawn uniformly from $\{0,1\}^{1/\eps^2}$ subject to each of their Hamming weights being $\frac{1}{2\eps^2}$. 
Here $I$ is drawn independently and uniformly from $[n]$. 
We let $\nu$ denote the resulting input distribution. 

We consider the one-way two-party model in which Alice sends a single, possibly randomized message $M$ of her inputs
$S_1, \ldots, S_n$, and Bob needs to evaluate $h(S_1, \ldots, S_n, T) = f_c(S_I, T)$.
We say $M$ is $\delta$-error for $(h, \nu)$ if Bob has a reconstruction
function $R$ for which 
$$\Pr_{\textrm{inputs}\sim \nu, \textrm{ private randomness}} \left [\left (R(M,T, I) = f_c(S_I, T) \right )\wedge \left (\left| \Delta(S_I,T) - \frac{1}{2\eps^2} \right| \geq \frac{c}{\eps} \right )\right ] \geq 1-\delta.$$
Let $\IC_{\nu, \delta}(h) \eqdef \min_{\text{$\delta$-error $M$ for $(h, \nu)$}} I(S_1, \ldots, S_n ; M)$.

\begin{theorem} \label{thm:GHD2}
For a sufficiently small constant $\delta > 0$, $\IC_{\nu, \delta}(h) = \Omega(n/\eps^2)$. In particular, the distributional 
one-way communication complexity of $h$ under input distribution $\nu$ is $\Omega(n/\eps^2)$. 
\end{theorem}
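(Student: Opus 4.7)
\textbf{Proof plan for Theorem \ref{thm:GHD2}.} The strategy is a standard direct-sum reduction from the $n$-fold problem to the single-instance problem $(f_c,\mu)$, for which Corollary \ref{cor:prim} already supplies the $\Omega(1/\eps^2)$ bound. The key properties we exploit are that, under $\nu$, Alice's strings $S_1,\dots,S_n$ are mutually independent (the conditioning in $\nu$ is only through the coupling of $S_I$ with $T$), and that the index $I$ and the string $T$ sit entirely on Bob's side.

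First I would perform the embedding. Fix any $\delta$-error one-way protocol $M$ for $(h,\nu)$. For each position $i\in[n]$, define a single-instance protocol for $(f_c,\mu)$ as follows: Alice, on input $S\sim\mu$ (the Alice-side of a $\mu$-sample), uses private randomness to draw $S_j$ independently and uniformly from the set of Hamming weight-$\tfrac{1}{2\eps^2}$ strings for every $j\neq i$, sets $S_i=S$, and sends $M(S_1,\dots,S_n)$. Bob, on his input $T$, evaluates the reconstruction $R(M,T,i)$. Since under $\nu$ conditioned on $I=i$ the joint law of $(S_1,\dots,S_n,T)$ is exactly this planted distribution, the success probability of the embedded protocol equals the success probability of $M$ conditioned on $I=i$. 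Averaging over $i$ and applying Markov's inequality to the conditional error, for at least $n/2$ choices of $i$ the embedded protocol is $2\delta$-error for $(f_c,\mu)$, and by Corollary~\ref{cor:prim} (taking $\delta$ small enough that $2\delta$ is also within its range) the information it reveals about $S_i$ satisfies
\[
 I\!\left(S_i ; M(S_1,\dots,S_n)\right) \;\geq\; \IC_{\mu,2\delta}(f_c) \;=\; \Omega(1/\eps^2).
\]
Crucially, because $S_1,\dots,S_n$ are i.i.d.\ (their joint marginal does not depend on $I$), this mutual information is the information cost of the embedded protocol regardless of which $i$ we chose.

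Next I would combine the per-coordinate bounds via the chain rule. Since $S_1,\dots,S_n$ are mutually independent,
\[
 I(S_1,\dots,S_n;M) \;=\; \sum_{i=1}^n I(S_i;M\mid S_{<i}) \;\geq\; \sum_{i=1}^n I(S_i;M),
\]
where the inequality uses independence of $S_i$ from $S_{<i}$ (so conditioning on $S_{<i}$ can only increase the mutual information with $M$, via $H(S_i\mid M)\geq H(S_i\mid S_{<i},M)$). Summing the single-instance bound over the $n/2$ ``good'' indices yields $\IC_{\nu,\delta}(h) \geq (n/2)\cdot \Omega(1/\eps^2) = \Omega(n/\eps^2)$. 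The ``In particular'' statement follows because the one-way communication complexity is at least the information cost (the message length bounds $H(M)\geq I(S_1,\dots,S_n;M)$).

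The only nontrivial point, and the one to check carefully, is that the error condition of the $n$-fold problem really does reduce to the single-instance promise: the event $|\Delta(S_I,T)-\tfrac{1}{2\eps^2}|\geq c/\eps$ inside the definition of $\delta$-error for $(h,\nu)$ coincides with the promise built into $\mu$ once we condition on $I=i$ and identify $S_i\leftrightarrow S$, so the embedded protocol inherits the $(f_c,\mu)$-error guarantee without loss. Everything else is routine information-theoretic bookkeeping.
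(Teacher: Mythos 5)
Your proposal is correct and follows essentially the same route as the paper's proof: the same coordinate-embedding of a single $(f_c,\mu)$ instance into position $i$ with the other strings sampled from Alice's private randomness, the same averaging argument giving $n/2$ good indices with conditional error $2\delta$, the same chain-rule-plus-independence decomposition $I(S_1,\dots,S_n;M)\geq\sum_i I(S_i;M)$, Corollary \ref{cor:prim} for the per-coordinate bound, and $H(M)$ bounding the message length. Even the one point you flag as needing care — identifying the law of $(S_i,T)$ under $\nu$ conditioned on $I=i$ (which is $\zeta$, with the promise event folded into the error definition) with the promise distribution $\mu$ — is treated with the same brevity in the paper, and is harmless since the discrepancy only perturbs the error parameter by $O(c)$.
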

\begin{proof}
Say an index $i \in [n]$ is {\it good} if
$$\Pr_{\textrm{inputs}\sim \nu, \textrm{ private randomness}} \left [\left (R(M,T, I) = f_c(S_I, T) \right )\wedge \left (\left| \Delta(S_I,T) - \frac{c}{2\eps^2} \right| \geq \frac{1}{\eps} \right ) \mid I = i\right ] \geq 1-2\delta.$$
By a union bound, there are at least $n/2$ good $i \in [n]$. By the chain rule for mutual information and 
using that the $S_i$ are independent and conditioning does not increase entropy, 
\begin{eqnarray*}
I(M ; S_1, \ldots, S_n) \geq \sum_{i=1}^n I(M ; S_i) \geq \sum_{\textrm{good }i} I(M ; S_i).
\end{eqnarray*}
We claim that for each good $i$, $I(M ; S_i) \geq  \IC_{\mu, 2\delta}(f_c)$. Indeed, consider the following protocol $M_i$ for $f_c$ under distribution $\mu$. 
Alice, given her input $S$ for $f_c$, uses her private randomness to sample $S_j$ for all $j \neq i$ independently and uniformly
at random from $\{0,1\}^{1/\eps^2}$ subject to each of their Hamming weights being $\frac{1}{2\eps^2}$. Bob sets $I = i$ and
uses his input $T$ for $f_c$ as his input for $h$. Since $i$ is good, it follows that $M_i$ is $2\delta$-error
for $(f_c, \zeta)$. Hence $I(M ; S_i) \geq \IC_{\mu, 2\delta}(f_c)$, which by Corollary \ref{cor:prim}, is $\Omega(1/\eps^2)$
provided $\delta > 0$ is a sufficiently small constant. Hence, $\IC_{\nu, \delta} = \Omega(n/\eps^2)$. Since 
$\IC_{\nu, \delta}(h) \leq I(M ; S_1, \ldots, S_n)$ for each $\delta$-error $M$ for $(h, \nu)$, and 
$I(M ; S_1, \ldots, S_n) \leq H(M)$ which is less than the length of $M$, the communication complexity lower bound follows. 
\end{proof}

\subsection{Cut-sparsifiers lower bound}
\label{sec:sparsifierLB}

We now prove Theorem~\ref{thm:sparsifierLB}.
Fix $n$ and $\eps$, and assume that every $n$-vertex graph has a $(1+\gamma\eps/2)$-cut sparsifier with at most $s$ edges, 
for a small constant $\gamma>0$ to be determined later. 
We wish to prove a lower bound on $s$.
Consider then the random graph $G$ constructed 
in the proof of Theorem~\ref{thm:sketchLB},
as a disjoint union of graphs $\aset{G_j: j\in[\veps^2n/2]}$, 
each being a bipartite graph with $1/\eps^2$ vertices in each side.
By our assumption above, such $G$ (always) has a subgraph $H$ 
which is a $(1+\gamma\eps/2)$-cut sparsifier having at most $s$ edges.
By Theorem~\ref{thm:sketchLB}, answering all possible cut queries correctly 
(in the sense of approximation factor $1\pm\eps$ with probability $1$) 
requires sketch size $\Omega(n/\eps^2)$ bits.
In fact, by inspecting the proof (specifically, of Lemma~\ref{lem:BobRecovers})
the above holds even if the correctness 
(1) holds only for cut queries contained in a single $G_j$,
i.e., queries $S\cup T$ for $S\subset L(G_j)$ and $T\subset R(G_j)$;
and (2) allows for each cut value 
an additive error of $\gamma/\veps^3$, where $\gamma=10^{-4}$.
The idea now is to encode $H$ using $m\approx s$ bits in a way 
that suffices to correctly answer all such cut queries
i.e., in the context of Lemma~\ref{lem:BobRecovers},
Alice will encode $H$ and send it as her sketch to Bob.
The sketch-size bound $m\geq \Omega(n/\eps^2)$ we proved for $G$ 
would then imply a similar bound on $s$.

Consider then the sparsifier $H$, which is an edge-weighted graph,
while the edges of $G$ all have unit weight.
Observe that $H$ must be a union of disjoint graphs $H_j$ on 
$L(G_j)\cup R(G_j)$ for $j\in[\veps^2n/2]$, 
because $H$ must preserve the corresponding cut, which has value zero.
Let $s_j$ denote the number of edges in $H_j$.
It will be convenient to consider each such graph separately,
so fix for now some $j\in[\veps^2n/2]$.

Consider first the case $s_j\leq \gamma^2/(6\eps^4)$,
and let us show how to encode $H_j$.
Construct from $H_j$ another graph $H'_j$ by rounding 
every non-integral edge weight
to one of its two nearby integers independently at random,
in an unbiased manner.
Specifically, each edge weight $w>0$ is rounded upwards to $w'=\ceil{w}$ 
with probability $w-\floor{w}$, and downwards to $w'=\floor{w}$ otherwise.
Now consider a fixed cut query $S\cup T$ in $G_j$,
denoting by $\delta_H(S\cup T)$ its cut value in $H$, and similarly for $H'$.
Then $\EX[\delta_{H'}(S\cup T)] = \delta_H(S\cup T)$,
and since the number of edges participating in this cut (in $H_j$) 
is at most $s_j\leq \gamma^2/(6\eps^4)$, 
by Hoeffding's inequality for $t=\gamma/(2\veps^3)$,
\[
  \Pr\Big[ \abs{\delta_{H'}(S\cup T) - \delta_H(S\cup T)} > t \Big]
  \leq e^{-2t^2/s_j} 
  \leq e^{-3/\eps^2}.
\]
Applying a union bound over at most $2^{2/\eps^2}$ possible cut queries $S\cup T$, 
we see that there exists $H'_j$ (it is in fact obtained with high probability)
such that for every cut query, the cut value in $H'_j$ is within additive $\gamma/(2\veps^3)$ from the cut value in $H_j$,
which in turn is within additive $\gamma\eps/2\cdot 1/\eps^4 = \gamma/(2\eps^3)$
from the cut value in $G_j$.
Hence, such $H'_j$ approximates all the cut values in $G_j$
sufficiently well for our intended application,
and Alice's sketch will thus encode $H'_j$ instead of $H_j$.
To simplify the description, let us include in $H'_j$ also edges of weight zero,
and then $H'_j$ has exactly $s_j$ edges (same as in $H_j$).
We further claim that the total edge-weight in $H'_j$ is at most $2/\eps^4$.
To see this, observe that 
(i) the total edge-weight in $H'_j$ (and similarly for $G_j$) 
is exactly twice the expected cut value of a random query in that graph;
and (ii) this expected cut value in $H'_j$ differs from the respective
expectation in $G_j$ by at most additive $\gamma/(2\veps^3)$.
It follows that the total edge-weight in $H'_j$ 
is at most $1/\eps^4$ larger than that in $G_j$,
which in turn is at most $1/\eps^4$.

The encoding of $H'_j$ has two parts, 
which describe separately the edges of $H'_j$ (without their weights), 
and their weights (assuming the edges are known).
Since $H'_j$ has $1/\eps^2$ vertices in each side, 
the number of possibilities for $s_j$ edges (without their weights)
among $\binom{2/\eps^2}{2} \leq 2/\eps^4$ vertex pairs 
is at most $\binom{2/\eps^4}{s_j}$.
Given the identity of $s_j$ edges in $H'_j$, 
the number of possibilities for their weights 
(recall the weights are integral and add up to at most $2/\eps^4$) 
is at most $\binom{s_j+2/\eps^4}{s_j} \leq \binom{4/\eps^4}{s_j}$.
We conclude that $H'_j$ can be encoded, on its two parts, 
using $O(\log \binom{4/\eps^4}{s_j})$ bits.

The second case $s_j > \gamma^2/(6\eps^4)$ is very simple,
and just encodes the original $G_j$ (instead of encoding $H_j$),
which trivially provides the value of every cut query inside $G_j$ exactly.
A straightforward encoding of $G_j$ takes $1/\eps^4$ bits.

Concatenating these encodings over all $j\in[\veps^2n/2]$,
yields a sketch that can approximate the value of all the needed cut queries 
(those that are inside a single $G_j$) within additive error $\gamma/\veps^3$.
It remains to bound the size of this sketch.
The number of $j$'s that fall into the second case $s_j > \gamma^2/(6\eps^4)$ 
is at most 
$\frac{\sum_j s_j}{\gamma^2/(6\eps^4)} = \frac{6\eps^4 s}{\gamma^2}$, 
and thus the total size of their encodings is at most $6s/\gamma^2$ bits.
For $j$'s that fall into the first case,
we use the fact $\binom{p}{k}\cdot \binom{p'}{k'} \leq \binom{p+p'}{k+k'}$,
and get that the total size of their encodings is at most
$ \sum_j O(\log \binom{4/\eps^4}{s_j}) 
  \leq  O(\log \binom{2n/\eps^2}{s}) 
  = O(s \log (\eps^{-2}n/s))
$ 
bits.
Altogether, there is a sketch of size $m=O(s(\gamma^{-2}+\log (\eps^{-2}n/s)))$ 
bits that encodes all the relevant cuts in $G$ within the desired accuracy.
Recalling that $\gamma$ is a constant and 
our sketch-size lower bound $m\ge \Omega(n/\eps^2)$
(from Lemma~\ref{lem:BobRecovers} and Theorem~\ref{thm:GHD}),
we conclude that the number of edges in $H$ is $s\geq \Omega(n/\eps^2)$,
which proves Theorem~\ref{thm:sparsifierLB}.

\section{Sketching Algorithm for Graph Cuts}
\label{sec:UB}

Let $G=(V,E,w)$ be a graph with $n=|V|$ vertices and edge weights $w:E\to\R_+$. 
The goal is to construct a sketch for $G$ that preserves each cut with constant
probability. For this goal we prove the following theorem.

\begin{theorem}
\label{thm:upper}
Given a weighted graph $G=(V,E,w)$ on $n$ vertices, where the non-zero
weights are in the range $[1,W]$, and $1/n\le \eps\le 1/30$, there
exists a cut sketch of size $\tilde O(n/\eps\cdot \log\log W)$
bits. Specifically, for every query $S\subset V$, the sketch produces
a $1+O(\eps)$ approximation to $w(S,\bar S)$, with probability at
least $7/9$.
\end{theorem}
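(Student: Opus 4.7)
The plan is to build the sketch in layers. The core layer handles the hardest case---an unweighted graph in which the cut of interest has value on the order of $1/\eps^2$---and the outer layers reduce the general problem to this canonical case. First I would implement the core construction, then lift it to arbitrary (unweighted) cut sizes by subsampling at multiple geometric scales, and finally to weighted graphs with weights in $[1,W]$ by bucketing edges by weight.

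For the core layer, using a $\polylog(n)$-approximation for sparse cut \cite{ARV09,madry2010fast}, I would iteratively split the current graph along any cut of sparsity below $1/\eps$, yielding a partition $V=P_1\cup\cdots\cup P_k$ in which each $P_i$ has no approximately sparse internal cut. A standard charging argument bounds the total number of cross-edges by $\tilde O(n/\eps)$, so they can be stored explicitly. In addition, I would store the exact degree of every vertex and sample, for each vertex $v$, about $1/\eps$ of its incident edges uniformly at random. For a query $S$, the cross-edge contribution to $w(S,\bar S)$ is recovered exactly; the contribution from each part $P$ is computed via the identity
\[
  e_P(S\cap P,\bar S\cap P) \;=\; \sum_{v\in S\cap P}\deg_P(v)\;-\;2\cdot e_P(S\cap P,S\cap P),
\]
where the degree sum is exact and the second term is estimated by a Horvitz--Thompson estimator from the edge samples. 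The main obstacle is the variance analysis of this difference-based estimator: one must show that the additive error in estimating $e_P(S\cap P,S\cap P)$ is $O(1/\eps)$, which is $O(\eps)$ relative to the assumed cut bound $1/\eps^2$. Here both key properties enter: the cut bound $w(S,\bar S)=O(1/\eps^2)$ constrains the total number of relevant edges, while the no-sparse-cut property of $P$ implies that $\min(|S\cap P|,|\bar S\cap P|)$ is small whenever $P$ contributes non-trivially to the cut. Together they make a per-vertex sample of $\tilde O(1/\eps)$ edges just sufficient for the desired additive precision.

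To lift to arbitrary unweighted cut sizes, I would maintain the core sketch at $O(\log n)$ geometrically-spaced target cut values $N_\ell = 2^\ell$, each obtained by subsampling the graph at rate $\Theta(1/(\eps^2 N_\ell))$ so that a cut of true value about $N_\ell$ becomes a cut of value $\Theta(1/\eps^2)$ in the subsampled graph. A query is answered by first obtaining a constant-factor estimate of $w(S,\bar S)$ from a crude (constant-$\eps'$) cut sparsifier also included in the sketch, then invoking the core sketch at the matching scale. For weighted graphs with weights in $[1,W]$, I would bucket edges into $O(\log W)$ geometric weight classes and apply the unweighted scheme per class; the $\log W$ factor is reduced to $\log\log W$ by observing that at most $O(\log\log W)$ weight classes can each contribute more than an $\eps$-fraction of any given cut, so the sketches for the remaining very light classes can be consolidated---for instance via a single coarser estimator that suffices for their joint negligible contribution---without harming the per-cut guarantee. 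Summing contributions yields total size $\tilde O(n/\eps \cdot \log\log W)$ bits, as claimed.
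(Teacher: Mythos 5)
Your core layer and the lift to arbitrary cut sizes match the paper's construction closely: the same sparse-cut decomposition with threshold $1/\eps$, explicit storage of the $\tilde O(n/\eps)$ cross-edges, exact degrees plus $\approx 1/\eps$ sampled incident edges per vertex, the same difference-based estimator, and the same variance argument combining the cut bound $O(1/\eps^2)$ with the no-sparse-cut property (the paper's stopping condition gives exactly $\card{S\cap P}\le \eps\, d_i(S\cap P,\bar S\cap P)$); likewise the $O(\log n)$ geometric scales with sampling rate $\Theta(1/(\eps^2 N_\ell))$ and a constant-accuracy sparsifier to select the scale are the paper's importance-sampling step with $p_e=\min\{w_e/\eps^2,1\}$ specialized to unit weights.

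The genuine gap is the final step, the reduction from weights in $[1,W]$ to the polynomial-range case with only a $\log\log W$ overhead. Your stated justification --- that at most $O(\log\log W)$ weight classes can each contribute more than an $\eps$-fraction of any given cut --- is false: the number of classes contributing an $\eps$-fraction can only be bounded by $1/\eps$, and the number of classes that can matter at all for a cut of value $C$ (those in the window $[\eps C/n^2, C]$) is $\Theta(\log(n/\eps))$, a quantity unrelated to $\log\log W$. Moreover, which classes are ``very light'' depends on the query, so a single consolidated estimator prepared at sketch time is not obviously meaningful, and bucketing into $O(\log W)$ classes with a full per-class sketch gives $\tilde O(n\log W/\eps)$ bits, not the claimed bound. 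The paper avoids this differently: it stores a maximum-weight spanning tree (Kruskal order), and for each relevant scale $w(e_j)$ it sketches the graph obtained by deleting edges lighter than $w(e_j)/n^3$ and \emph{contracting} edges heavier than $n^2 w(e_j)$, keeping a sketch only when no scale within a factor $2$ was already stored. The contraction is the key: after $5\log_2 n$ stored scale steps every previously connected component collapses to a single vertex, so a telescoping sum bounds the total size of all per-scale sketches by $\tilde O(n/\eps)$ independent of $W$; the $\log\log W$ factor enters only because each of the $n-1$ tree-edge weights is stored to relative precision $1+\eps/2$, costing $O(\log(\log W/\eps))$ bits apiece. Without some analogue of this contraction-and-telescoping idea (or another mechanism that prevents paying $\tilde O(n/\eps)$ separately for each of up to $\Theta(\log W)$ scales), your outer layer does not reach the claimed $\tilde O(n/\eps\cdot\log\log W)$ bound.
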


In this section, we focus on the case when the non-zero edge weights
are in the polynomial range, i.e., $W=n^5$, as this is the crux of the
construction. We show how to extend the construction to the
general--weights case in Section~\ref{sec:largeWeights}.

We will say that $\hat a>0$ is a $\rho$-approximation to $a>0$ (for
$\rho\ge1$) if their ratio is $\hat a / a \in [1/\rho,\rho]$.  For an
edge weights function $w$, let $w(A,B)$ denote the total weight of
edges connecting two disjoint subsets $A,B\subset V$.

\subsection{Construction}

Below, we show how to sketch the graph, as well as how to estimate the
cut size given the sketch.

\paragraph{Sketching Algorithm.}
The sketch has two components. 
The first component is a standard $1.4$-cut sparsifier 
(recall that a $(1+\eps)$-cut sparsifier is a sparse graph $H$ on same vertex-set,
which approximates every cut in $G$ within a factor of $1+\eps$).
We can use the construction of Bencz{\'u}r and Karger \cite{BK96}, 
or subsequent constructions \cite{SS11,BSS12,FHHP11,KP12}
(some of which produce a spectral sparsifier, which is only stronger);
any of these methods will produce a graph $H$ with $\tO(n)$ edges. 

The second component is the main ingredient of the sketch, and is
described next.
Let $\tilde C=\{1.4^i\mid 0\le i\le \log_{1.4}n^5\}$ be the set of size
$O(\log n)$ such that each cut value in $G$ is $1.4$-approximated by
some value $c\in \tilde C$.  For each value $c\in \tilde C$ we
construct a structure $D_{c}$ as follows.
First by scaling all the edge weights, let us assume $c=1$. 
Now discard every edge $e$ whose (scaled) weight is $w_e>5$. 

In the next step, called importance sampling, we sample each
(remaining) edge $e\in E$ independently with probability $p_e\eqdef
\minn{w_e/\eps^2,1}$, and assign the sampled edges new edge weights
$\tilde w_e\eqdef w_e/p_e$.  Notice that $\tilde w_e\in [\eps^2,5]$.
(It may be convenient to consider the non-sampled edges as having
weight $\tilde w_e\eqdef 0$.)  Now let $\tilde E$ be the set of
sampled edges, and partition it into $l=O(\log \frac{1}{\eps})$
classes according to the (new) edge weights, namely, $\tilde
E=L_1\cup\cdots\cup L_l$ where $L_i = \aset{e\in E:\ \tilde w_e\in
  (5\cdot 2^{-i}, 5\cdot 2^{-i+1}] }$.

For each class $L_i$, recursively partition the graph $(V,L_i)$ as
follows: break the current cluster $P\subseteq V$ whenever it contains
a subset $P'\subset P$ of size $\card{P'}\leq \card{P}/2$ such that
$d_i(P',P\setminus P')/\card{P'}\leq 1/\eps$, where $d_i(A,B)$ is the
number (not weight) of class $L_i$ edges connecting between two
disjoint subsets $A,B\subset V$.
Once the recursive partitioning process is finished, denote the
resulting partition of $V$ by $\calP_i$, and store in the sketch all
the edges $E_{\calP_i}$ connecting different parts of $\calP_i$.  In
addition, store for every vertex $v\in V$ its ``weighted degree'',
i.e., the total weight of its incident class $L_i$ edges $\tilde
w_i(v)\eqdef \sum_{e\in L_i: v\in e} \tilde w_{e}$.  From the
remaining edges (those inside a part of $\calP_i$), we store the
following sample: for every non-isolated vertex store $s=1/\eps$
incident edges, each chosen uniformly at random with replacement.  For
every stored edge $e\in L_i$ (from $E_{\calP_i}$ or not), the sketch
keeps the edge weight $\tilde w_e$.  

\paragraph{Estimation algorithm.}
Given a query subset $S\subset V$, 
first use the graph sparsifier $H$ to compute $\tilde c$, 
a $1.4$-approximation to the desired cut value $w(S,\bar S)$,
and use the structure $D_{c}$, 
where $c\in \tilde C$ is a $1.4$-approximation to $\tilde c/(1.4)^2$,
namely, $c\in[\tilde c/(1.4)^3,\tilde c/1.4]$. 
Now estimate the contribution to the cut from edges in each class $L_i$
as follows.

The contribution from a single class $L_i$, $i\in [l]$ is composed of
two terms.  The first term is the total $\tilde w_e$ weight of edges
between $S$ and $\bar S$ that are in the set $E_{\calP_i}$ (recall the
sketch stores all edges between different parts of $\calP_i$).  

The second term estimates the $\tilde w_e$ weight of edges inside each
part $P\in \calP_i$.  Specifically, check whether $|S\cap P|\le |P|/2$
or $|\bar S\cap P|< |P|/2$.  In the first case, the estimate $I_P$ for
$P$ is given by the sum over all vertices in $S\cap P$, of their
``weighted degree'' (which is stored in the sketch) minus the
(appropriately scaled) weight $\tilde w$ of sampled edges that are
inside $S\cap P$.
Formally, we define
\begin{equation}  \label{eq:I_p}
  I_P \eqdef \sum_{x\in S\cap P} \Big[\tilde w_i(x) - \tfrac{d_{ix}(P)}{s}\sum_{e\in L_{ix}} 1_{\aset{e\in E(S,\bar S)}}\tilde w_e \Big],
\end{equation}
where $d_{ix}(T)$ is the number of class $L_i$ edges from vertex $x$ to a set $T\subset V$, 
and $L_{ix}$ is the multiset of $s$ class $L_i$ edges incident to $x$
(the sample chosen by the sketching algorithm).
In
the second case ($|\bar S\cap P|<|P|/2$), the estimate is similar
except that we now use the ``weighted degrees'' in $\bar S\cap P$ and
total weight of edges inside $\bar S\cap P$.  

The overall second term is the sum of these estimates over all parts
$P\in \calP_i$.  The final estimate is just the sum of these two terms
over all classes $L_i$.

\subsection{Size and Correctness Guarantees}

First we show the bound on space usage.  The sparsifier $H$ has
$\tO(n)$ edges.  By construction, we have $O(\log n)$ possible cut
values $\tilde C$ and for each one, we have $l=O(\log
\tfrac{1}{\eps})\leq O(\log n)$ edge weight classes.  For each weight
class $L_i$, the sketch stores (1) at most $O(\tfrac{1}{\eps}n\log n)$
edges in $E_{\calP_i}$, because each step in the recursive
partitioning process contributes $d_i(P',P\setminus P')/\card{P'}\leq
1/\eps$ edges per vertex in $P'$, and each vertex appears in the
smaller subset $P'$ at most $\log n$ times; and (2) at most $n/\eps$
sampled edges, because for every non-isolated vertex we sample
$s=1/\veps$ incident edges.  All in all, we have $\tO(n/\eps)$ edges,
each requiring $O(\log n)$ bits.

We proceed to analyze the accuracy of the estimation procedure.  Fix a
query $S\subset V$.  After consulting the sparsifier $H$, we can
approximate $w(S,\bar S)$ within factor $(1.4)^2<2$ and thus use a
data structure $D_c$ where $c\le w(S,\bar S)\le 4c$.  Thus, by
rescaling to $c=1$, we need to $1+\eps$ estimate the cut value
$w(S,\bar S)$ which is between $1$ and $4$.

First, note that the discarded edges do not affect the solution since
they could not have been part of $w(S,\bar S)$ as they are too heavy.
Second, we bound the effect of the importance sampling step on the value 
of the cut, namely, that \whp $\tilde w(S,\bar S)$ is a $1+\eps$
approximation to $w(S,\bar S)$. 
Indeed, its expectation $\EX[\tilde w(S,\bar S)] = w(S,\bar S)$ 
since every edge $e$ that was not discarded satisfies $\EX[\tilde w_e] = w_e$
(and more generally, it is a Horvitz-Thompson estimator),
and its variance is
\[
  \Var[\tilde w(S,\bar S)]
  = \sum_{e\in E(S,\bar S)} \Var[\tilde w_e]
  = \sum_{e\in E(S,\bar S)} w_e^2/p_e-w_e^2
  \leq \sum_{e\in E(S,\bar S)} \eps^2 w_e
  = \eps^2 w(S, \bar S),
\] 
where the inequality is verified separately for $p_e=1$ and for $p_e=w_e/\eps^2$.
Thus, by Markov's inequality, with probability at least $8/9$, we have
$\abs{\tilde w(S,\bar S) - w(S,\bar S)} 
  \leq 3 \veps \sqrt{w(S, \bar S)}
  \leq 3 \veps \cdot w(S, \bar S)$.

Next, we show that the rest of the procedure is likely to estimate 
$\tilde w(S,\bar S)$ well. 
Let $c_i$ be the contribution to $\tilde w(S,\bar S)$ by edges in class $L_i$; 
hence $\sum_{i=1}^l c_i=\tilde w(S,\bar S)$ 
and each $c_i\le \tilde w(S,\bar S)\leq (1+3\veps)\, w(S,\bar S)\le 5$. 
Let us also denote $\lambda_i\eqdef 5\cdot 2^{-i}$, 
then all edges $e\in L_i$ have weight $\tilde w_e\in [\lambda_i, 2\lambda_i]$. 
Let $\hat c_i$ be the contribution of edges of class $L_i$ to the estimate
computed by the algorithm;
it is equal to $\sum_{e\in E_{\calP_i}\cap E(S,\bar S)} \tilde w_e$,
which is computed exactly because the sketch stores all edges between 
different parts of $\calP_i$,
plus the sum over all parts $P\in \calP_i$ of their estimate $I_P$.
Recall that, in \eqref{eq:I_p}, each $I_P$ is the sum of ``weighted degrees'' 
in either $S\cap P$ or $\bar S\cap P$ (whichever has smaller cardinality),
minus the appropriately scaled weight of edges inside that subset.

It is easy to verify that $\hat c_i$ is an unbiased estimator for $c_i$, 
namely, $\EX[\hat c_i]=c_i$. 
We now analyze its variance, which comes only from 
the estimators $I_P$ of edges inside each $P\in\calP_i$.
We can assume that for each $P\in \calP$ we have $\card{S\cap P}\le \card{P}/2$ 
(otherwise, exchange $S\cap P$ and $\bar S\cap P$). 
Let $c_i(P)$ be the weight of the cut $\tilde w(S,\bar S)$ inside $P$, 
restricted to edges from class $L_i$. Then $\sum_{P\in\calP_i} c_i(P) \leq c_i$.
In each part $P$, the variance of $I_P$ comes only from the sampled edges 
(since the ``weighted degrees'' are known exactly):
\begin{align*}
  \Var[I_P]
  &= \sum_{x\in S\cap P} \left( \tfrac{d_{ix}(P)}{s} \right)^2 \Var\Big[ \sum_{e\in L_{ix}} 1_{\aset{e\in E(S,\bar S)}} \tilde w_e \Big] \\
  &\leq \sum_{x\in S\cap P} \left( \tfrac{d_{ix}(P)}{s} \right)^2\cdot s\cdot \tfrac{d_{ix}(S\cap P)}{d_{ix}(P)}\cdot (2\lambda_i)^2 \\
  &= 4\veps\lambda_i^2\sum_{x\in S\cap P} d_{ix}(P)\cdot d_{ix}(S\cap P) \\
  &\le 4\veps\lambda_i^2\cdot |S\cap P| \sum_{x\in S\cap P} d_{ix}(P) \\
  &\le 4\veps\lambda_i^2\cdot |S\cap P|\cdot 
      \Big[ d_i(S\cap P, \bar S\cap P) + 2 |S\cap P|^2 \Big].
\intertext{By the stopping condition of the recursive partitioning
$\card{S\cap P} \leq \veps\cdot d_i(S\cap P,\bar S\cap P)$, and thus
}
  \Var[I_P]
  &\le 4\veps^2\lambda_i^2 (d_i(S\cap P,\bar S\cap P))^2 
       \Big[ 1 + 2 \veps^2 d_i(S\cap P,\bar S\cap P) \Big].
\intertext{Observe that $c_i(P)$ consists of $d_i(S\cap P,\bar S\cap P)$ edges 
whose $\tilde w$ weight is in the range $[\lambda_i,2\lambda_i]$,
hence $\lambda_i d_i(S\cap P,\bar S\cap P) \leq c_i(P)$.
Using $c_i(P)\leq c_i\leq 5$ and $\lambda_i \geq \veps^2$,
we can further derive $\veps^2 d_i(S\cap P,\bar S\cap P) \leq 5$,
which together give 
}
  \Var[I_P]
  &\le 44\veps^2 (c_i(P))^2.
\end{align*}
Therefore, the total variance over all parts $P\in\calP_i$ is 
\[
  \Var[\hat c_i]
  = \sum_{P\in\calP_i} \Var[I_P]
  \leq 44\veps^2 \sum_{P\in\calP_i} (c_i(P))^2
  \leq 44\veps^2 c_i^2.
\]
It follows that the algorithm's final estimate
$\hat c\eqdef \sum_{i=1}^l \hat c_i$ is unbiased, namely,
\[
  \EX[\hat c] 
  = \sum_i \EX[\hat c_i]
  = \sum_i c_i
  = \tilde w(S,\bar S),
\]
and its total variance (over all levels $i$) is 
\[
  \Var[\hat c]
  = \sum_i \Var[\hat c_i]
  \leq 44\veps^2 \sum_i c_i^2
  \leq 44\veps^2 \tilde w(S,\bar S).
\]
Thus, by Markov's inequality, with probability at least $8/9$, we have that
$
  \abs{\hat c - \tilde w(S,\bar S))} 
  \leq 3 \veps \sqrt{44\, \tilde w(S,\bar S)}
  \leq 21 \veps\, \tilde w(S,\bar S)
$,
where the last inequality uses that
$\tilde w(S,\bar S) \geq (1-3\veps)\, w(S,\bar S) \geq 9/10$.
Altogether, we obtain that with probability at least $7/9$, 
the algorithm's estimate $\hat c$ 
is a $1+O(\eps)$ approximation for $\tilde w(S,\bar S)$,
and also the latter quantity is a $1+O(\eps)$ approximation for $w(S,\bar S)$,
Or, more directly, 
\[
  \abs{\hat c-w(S,\bar S)}
  \leq 3\veps\, w(S,\bar S) + 21 \veps (1+3\veps)\, w(S,\bar S)
  \leq 27 \veps\, w(S,\bar S),
\]
which completes the proof of Theorem ~\ref{thm:upper}.

\section{Extending the Sketching Algorithm to General Edge Weights}
\label{sec:largeWeights}

We now build on the results of Section~\ref{sec:UB} for polynomial weights
to show the upper bound (Theorem \ref{thm:upper}) for general edge weights.
Concretely, assume there is a sketching algorithm, 
which we shall call the ``basic sketch'',
for the case where all edge weights are in a polynomial range, 
say for concreteness $[1,n^5]$ 
(which by scaling is equivalent to the range $[b,n^5b]$ for any $b>0$),
which uses space $\tO(n/\veps)$.
We may assume the success probability of this sketch is at least $1-1/n^8$,
e.g., by using standard amplification using $O(\log n)$ repetitions,
thereby increasing the sketch size by at most $O(\log n)$ factor.
Throughout, we measure the memory usage in machine words, and assume
that a machine word can accommodate the weight of an edge and also at
least $2\log n$ bits.  As before, we may assume $\veps>1/n$, as
otherwise the theorem is trivial.

\subsection{Construction}

\paragraph{Sketching Algorithm.}
The sketch has two components. 
The first component is essentially a maximum-weight spanning tree $T$
computed using Kruskal's algorithm. 
Specifically, start with $T$ as an empty graph on vertex set $V$ 
(so every vertex forms a connected component of size one), 
and then go over the edges $e\in E$ in some order $\pi$ of decreasing weight,
each time adding $e$ to the current $T$ if this would not close a cycle in $T$
(i.e., the endpoints of $e$ are currently in different connected components).
If the edge weights are all distinct, then $\pi$ is unique;
otherwise, fix some $\pi$ by breaking ties arbitrarily
(and use the same $\pi$ later, for sake of consistency).
At the end, $T$ is a spanning tree and has $n-1$ edges.
Our sketch stores the list of edges that form $T$,
denoted $e_1,e_2,\ldots,e_{n-1}\in E$, sorted in order of insertion 
(which is also their ordering according to $\pi$).

The second component is computed by iterating over $e_1,\ldots,e_{n-1}$;
in iteration $j$, we consider the graph obtained from $G$ in three steps:
(i) remove all edges $e\in E$ of weight $w(e) < w(e_j)/n^3$; 
(ii) change all edges $e\in E$ of weight $w(e)\geq n^2\cdot w(e_j)$ to have
infinite weight; and
(iii) contract all edges of infinite weight, keeping parallel edges 
(self-loops may be removed as they will have no effect anyway).%
\footnote{We manage vertex names in a systematic manner, 
e.g., a merging of vertices $u,v$ keeps the lexicographically smaller name.
}
We denote the graph obtained after step (ii) by $G_j$,
and the one obtained after step (iii) by $G'_j$.
Notice that from the perspective of cuts, $G_j$ and $G'_j$ are equivalent,
and we thus refer to $G'_j$ as the \emph{reduced form} of $G_j$.
Observe that $G'_j$ has at most $n$ vertices, 
and its edge weights lie in the range $[n^{-3}\cdot w(e_j), n^2\cdot w(e_j)]$,
so in principle, we can apply the assumed sketching algorithm 
(that works for edge weight in a polynomial range) on $G'_j$.
We do so (apply the assumed sketching algorithm) 
and store its result in our sketch, but with two twists:
First, we apply the assumed sketching algorithm 
separately on every connected component of $G'_j$ of size at least $2$
rather than on the entire $G'_j$.
Second, we do it only if there is no earlier iteration $k<j$ 
with $w(e_k)/w(e_j)<2$ for which we already sketched and stored $G'_k$.
For instance, if $w(e_{j-1})=w(e_j)$, then $G'_{j-1}$ and $G'_j$ are identical
and $G'_j$ will not be stored (because of $k=j-1$ or an even smaller $k$).

\paragraph{Estimation algorithm.}
Given a query subset $S\subset V$, 
find the smallest $j\in[n-1]$ such that $e_j$ crosses the cut $(S,\bar S)$; 
such $j$ exists because $\aset{e_1,\ldots,e_{n-1}}$ forms a spanning tree.
We further show in Lemma \ref{lem:heaviest} below that $e_j$ 
is a heaviest edge in this cut, hence $w(S,\bar S) / w(e_j) \in [1,n^2]$.
Now find the largest $k\leq j$ for which we sketched and stored $G'_k$;
by construction $w(e_k)/w(e_j)\in[1,2)$.
Lemma \ref{lem:GkApprox} below proves that the cut values in $G$ and in $G_k$ 
are almost the same.
Next, compute the connected components of the graph $(V,\aset{e_1,\ldots,e_k})$,
and observe they must be exactly the same as the connected components of $G_k$.
Obviously, the value of the cut $(S,\bar S)$ in $G_k$ 
is just the sum, over all connected components $V'\subset V$ in $G_k$, 
of the contribution to the cut from edges inside that component, 
namely $w(S\cap V', \bar S\cap V')$.
Recall that $G'_k$ has essentially the same cuts as $G_k$
and we can thus estimate each such term $w(S\cap V', \bar S\cap V')$ 
using the sketch we prepared for $G'_k$
(more precisely, using the sketch of the respective component $V'$ of $G'_k$, 
unless $\card{V'}=1$ in which case that term is trivially $0$).
To this end, we need to find out which vertices of $G_k$ 
were merged together to form $G'_k$, 
which can be done using $e_1,\ldots,e_{n-1}$ as follows.
Find the largest $k^*$ such that $w(e_{k^*}) \geq n^2\cdot w(e_k)$,
and compute the connected components of the graph $(V,\aset{e_1,\ldots,e_{k^*}})$.
Lemma \ref{lem:ConnComp} below proves that these connected components 
(or more precisely the partition of $V$ they induce) 
are exactly the subsets of vertices that are merged in $G_k$ to create $G'_k$.
Now that know the vertex correspondence between $G_k$ and $G'_k$,
we estimate the cut value $w(S\cap V', \bar S\cap V')$ 
by simply using the estimate for the corresponding cut value in $G'_k$, 
where the latter is obtained using the basic sketch prepared for $G'_k$.

\subsection{Accuracy Guarantee}

The accuracy of the estimation algorithm follows from the above discussion,
which uses the three lemmas below, 
together with a union bound over the events of an error in any of the 
basic estimates used along the way, the number of which is $O(n)$, 
because they correspond to disjoint subsets of $V$.
(The union bound is applicable because these basic sketch are queried 
in a non-adaptive manner, or alternatively, because we make at most
one query to every basic sketch that is constructed independently of the others.)

\begin{lemma} \label{lem:heaviest}
Fix $S\subset V$ and let $e'\in E$ be the first edge, 
according to the ordering $\pi$, that crosses the cut $(S,\bar S)$. 
Then this $e'$ is the first edge in the sequence $e_1,\ldots,e_{n-1}$
that crosses the cut $(S,\bar S)$.
\end{lemma}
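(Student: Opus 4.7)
The plan is to prove the lemma in two steps: first, show that $e'$ itself belongs to the sequence $e_1,\ldots,e_{n-1}$ produced by Kruskal's procedure, and second, show that no edge in that sequence appearing before $e'$ crosses the cut.

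For the first step, I would examine the state of $T$ at the moment Kruskal's algorithm processes $e'$ (i.e., when $e'$ is the current edge under the ordering $\pi$). By the definition of $e'$ as the first $\pi$-edge crossing $(S,\bar S)$, none of the edges processed (and hence none of the edges added to $T$) before $e'$ cross $(S,\bar S)$. Consequently, every connected component of the partial tree built so far is contained entirely in $S$ or entirely in $\bar S$. The two endpoints of $e'$ lie on opposite sides of the cut, so they must belong to different components, meaning $e'$ does not close a cycle and is therefore inserted into $T$. Thus $e' = e_{j_0}$ for some $j_0 \in [n-1]$.

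For the second step, I would argue that $e_{j_0}$ is the first edge of the sequence $e_1,\ldots,e_{n-1}$ to cross $(S,\bar S)$. Indeed, the insertion order into $T$ respects $\pi$ (Kruskal processes edges one at a time in the order $\pi$ and only decides whether to insert or skip each one), so for every $i < j_0$ the edge $e_i$ satisfies $\pi(e_i) < \pi(e') = \pi(e_{j_0})$. Since $e'$ is by assumption the $\pi$-first edge crossing the cut, no such $e_i$ can cross $(S,\bar S)$. Hence $e_{j_0}=e'$ is the first edge in the sequence $e_1,\ldots,e_{n-1}$ that crosses the cut, as required.

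The argument is essentially the classical cut property for a maximum spanning tree built by Kruskal's algorithm, and no step is expected to be particularly difficult; the main point to be careful about is the correspondence between the $\pi$-ordering and the indexed tree sequence $e_1,\ldots,e_{n-1}$, together with the observation that before processing $e'$ no partial component straddles the cut, so $e'$ cannot form a cycle and must be added to $T$.
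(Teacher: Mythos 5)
Your proof is correct and follows essentially the same route as the paper: show that when Kruskal's procedure reaches $e'$ no previously inserted edge crosses $(S,\bar S)$, so the endpoints of $e'$ lie in different components and $e'$ is added to $T$, and then note that any earlier edge of the sequence precedes $e'$ in $\pi$ and hence cannot cross the cut. You merely make the second step explicit where the paper leaves it implicit.
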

\begin{proof}
Let $e'\in E$ be the first edge, according to the ordering $\pi$,
that crosses the cut $(S,\bar S)$. 
Clearly, $e'$ is a heaviest edge in this cut.
Now observe that in the construction of $T$ (i.e., $e_1,\ldots,e_{n-1}$),
when $e'$ is considered, $T$ has no edges between $S$ and $\bar S$,
hence the endpoints of $e$ lie in different connected components,
and $e'$ must be added to $T$.
\end{proof}
\begin{lemma} \label{lem:GkApprox}
Consider a query $S\subset V$ and let $k\in[n-1]$ be the value computed
in the estimation algorithm. 
Then the ratio between the value of $w(S,\bar S)$ in the graph $G_k$ 
and that in the graph $G$ is in the range $[1-\tfrac1n,1]\subset [1-\veps,1]$,
formally
\[
   1-\tfrac1n \leq \frac{w_{G_k}(S,\bar S)}{w_{G}(S,\bar S)} \leq 1.
\]
\end{lemma}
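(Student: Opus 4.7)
The plan is to bound separately the effects of the two modifications that define $G_k$ from $G$, namely (i) removing all edges of weight $<w(e_k)/n^3$, and (ii) inflating the weight of all edges of weight $\geq n^2 w(e_k)$ to infinity. Throughout, I will use the fact, established in the estimation algorithm's description together with Lemma~\ref{lem:heaviest}, that $e_j$ is a heaviest edge crossing $(S,\bar S)$ in $G$ and that $k\leq j$ satisfies $w(e_k)/w(e_j) \in [1,2)$. In particular, $w(e_j) \leq w_G(S,\bar S)$ (as $e_j$ is a single edge in the cut) and $w(e_k) \geq w(e_j)$.

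For the upper bound $w_{G_k}(S,\bar S) \leq w_G(S,\bar S)$, step (i) can only decrease cut values since it removes edges. For step (ii), I claim that no affected edge crosses $(S,\bar S)$, so the cut value is unchanged. Indeed, any edge affected by step (ii) has weight $\geq n^2 w(e_k) \geq n^2 w(e_j) > w(e_j)$, which exceeds the maximum weight of any edge crossing $(S,\bar S)$; hence no such edge can cross the cut.

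For the lower bound, the cut value decreases only due to step (i). The number of edges potentially crossing the cut is at most $\binom{n}{2} \leq n^2/2$, and each removed edge has weight strictly less than $w(e_k)/n^3$. The total weight lost from the cut is therefore at most
\[
  \tfrac{n^2}{2} \cdot \tfrac{w(e_k)}{n^3} \;=\; \tfrac{w(e_k)}{2n} \;\leq\; \tfrac{w(e_j)}{n} \;\leq\; \tfrac{w_G(S,\bar S)}{n},
\]
using $w(e_k) < 2 w(e_j)$ and $w(e_j) \leq w_G(S,\bar S)$. Dividing through gives $w_{G_k}(S,\bar S)/w_G(S,\bar S) \geq 1 - 1/n$, which also easily implies the claim relative to $\veps$ since we assumed $\veps \geq 1/n$.

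The only genuinely delicate point is step (ii): one must verify that the sweeping operation of setting very heavy edges to infinite weight cannot boost the cut value, and this reduces entirely to observing that the threshold $n^2 w(e_k)$ strictly exceeds $w(e_j)$, the maximum cut-edge weight guaranteed by Lemma~\ref{lem:heaviest}. Once this observation is in place, the rest is a direct counting argument, so I do not expect any serious obstacle.
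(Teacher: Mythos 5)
Your proof is correct and follows essentially the same route as the paper: bound the weight lost by deleting edges lighter than $w(e_k)/n^3$ by $\binom{n}{2}\cdot w(e_k)/n^3 \le w_G(S,\bar S)/n$ using $w(e_k) < 2w(e_j) \le 2\,w_G(S,\bar S)$, and observe that edges promoted to infinite weight are too heavy to cross $(S,\bar S)$ because $e_j$ is a heaviest cut edge (the paper compares them to $w(S,\bar S)\le n^2 w(e_j)$, you compare directly to $w(e_j)$ — an equivalent observation). No gaps.
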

\begin{proof}
The edges in $G_k$ are obtained from the edges of $G$, by either 
(1) removing edges $e$ whose weight is $w(e)<w(e_k)/n^3$; or 
(2) changing edges $e$ with $w(e)\ge n^2\cdot w(e_k)$ to have infinite weight.
The first case can only decrease any cut value, %
while the second case can only increase any cut value.

Recall that the estimation algorithm finds $j$ such that 
$w(S,\bar S)/w(e_j) \in [1,n^2]$, and then finds $k\leq j$, 
which we said always satisfies $w(e_k)/w(e_j) \in [1,2)$. 
Thus, $w(S,\bar S)/w(e_k) \in (\half,n^2]$.
So one direction of the desired inequality follows by observing 
that edges in $G$ that fall into case (1) have the total weight at most
\[
  \tbinom{n}{2} w(e_k)/n^3 
  \leq \tfrac{2}{n} w(e_k) 
  \leq \tfrac{1}{n} w(S,\bar S).
\]
The other direction follows by observing that edges $e$ that fall into case (2)
have (in $G$) weight $w(e) > n^2\cdot w(e_k) \ge w(S,\bar S)$,
and therefore do not belong to the cut $(S,\bar S)$.
\end{proof}

\begin{lemma} \label{lem:ConnComp}
Fix $w^*>0$, let $E^*\eqdef\aset{e\in E: w(E) \geq w^*}$,
and find the largest $i^*\in[n-1]$ such that $w(e_{i^*}) > w^*$.
Then the graphs $(V,E^*)$ and $(V,\aset{e_1,\ldots,e_{i^*}})$ 
have exactly the same connected components 
(in terms of the partition they induce of $V$).
\end{lemma}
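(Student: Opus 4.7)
The plan is to prove the equality of connected components by establishing that the two edge sets induce the same equivalence relation on $V$, which reduces to checking two directions of containment for the ``connected in'' relation.

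For the easy direction, I would first observe that Kruskal's algorithm inserts the spanning tree edges $e_1,\ldots,e_{n-1}$ in non-increasing order of weight under the tie-breaking $\pi$, so $w(e_1)\geq w(e_2)\geq\cdots\geq w(e_{n-1})$. Consequently, for every $j\le i^*$ we have $w(e_j)\geq w(e_{i^*})\geq w^*$, so $e_j\in E^*$. This gives the inclusion $\{e_1,\ldots,e_{i^*}\}\subseteq E^*$, from which it follows that any two vertices connected in $(V,\{e_1,\ldots,e_{i^*}\})$ are also connected in $(V,E^*)$.

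For the other direction, I would take an arbitrary edge $e\in E^*$ and show that its endpoints lie in the same connected component of $(V,\{e_1,\ldots,e_{i^*}\})$. The key step is to examine the state of Kruskal's partial spanning tree at the moment $e$ is considered under $\pi$. Since $\pi$ processes edges in non-increasing weight order, every tree edge that has already been inserted at this moment has weight at least $w(e)\geq w^*$, and hence belongs to $\{e_1,\ldots,e_{i^*}\}$. Now split into two cases. If $e$ itself is inserted into $T$, then $e=e_j$ for some $j$; the fact that the previously-inserted tree edges all satisfy the weight bound and precede $e$ in $\pi$ forces $j\le i^*$, so $e\in\{e_1,\ldots,e_{i^*}\}$ and we are done. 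Otherwise Kruskal skips $e$, which happens if and only if the two endpoints of $e$ are already in the same component of the current partial tree, and that partial tree is a subgraph of $(V,\{e_1,\ldots,e_{i^*}\})$, so the endpoints are connected there as well.

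I expect no significant obstacle; the only mild subtlety is handling ties in edge weights, where the order among same-weight edges is determined by the fixed tie-breaking in $\pi$. The argument goes through because the weight bound ``at least $w(e)$'' on previously-inserted tree edges holds regardless of how ties are resolved, and because the same $\pi$ is reused throughout the algorithm (as emphasized in the sketch construction), guaranteeing that the enumeration $e_1,\ldots,e_{n-1}$ and the labeling of $i^*$ are consistent with the order in which $e$ is processed.
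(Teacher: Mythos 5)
Your proof is correct and follows essentially the same route as the paper's: both arguments boil down to the observation that $e_1,\ldots,e_{i^*}$ is exactly the prefix of Kruskal's run restricted to edges of weight at least $w^*$, i.e., a maximum-weight spanning forest of $(V,E^*)$, so it spans the same components. You merely unpack into explicit two-sided connectivity containments (including the "skipped edge means endpoints already joined by heavier tree edges" step) what the paper states in one line by appealing to Kruskal's correctness, and like the paper you implicitly treat the $\geq w^*$ versus $> w^*$ boundary convention as consistent, which is a typo in the lemma statement rather than a flaw in your argument.
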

\begin{proof}
It is easy to see that executing our construction of $T$ above on the set $E^*$,
gives the exact same result as executing it for $E$ but stopping once we reach 
edges of weight smaller than $w^*$.
The latter results with the edges $e_1,\ldots,e_{i^*}$,
while the former is clearly an execution of Kruskal's algorithm,
i.e. computes a maximum weight forest in $E^*$.
\end{proof}

\subsection{Size Analysis}

\begin{lemma} \label{lem:size}
The total size of the sketch is at most $\tO(n/\veps\log\log W)$,
where we assume all non-zero edge weights are in the range $[1,W]$.
\end{lemma}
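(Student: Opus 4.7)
The plan is to bound the two components of the sketch separately: the maximum-weight spanning tree $T$, and the collection of basic sketches stored for the reduced graphs $G'_j$ at the selected iterations.

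The spanning tree $T$ has exactly $n-1$ edges, each stored with two endpoints and a weight using $O(1)$ machine words, for a total of $O(n)$ words, comfortably within the target bound. For the second component, let $\mathcal{J}\subseteq\{1,\ldots,n-1\}$ denote the set of stored iterations. By the skipping rule (we skip $j$ whenever some earlier stored $k$ has $w(e_k)/w(e_j)<2$), consecutive stored weights $w(e_j)$ must differ by at least a factor of $2$; since all non-zero weights lie in $[1,W]$, this yields $|\mathcal{J}|=O(\log W)$. For each $j\in\mathcal{J}$, the reduced graph $G'_j$ carries edge weights in the polynomial range $[w(e_j)/n^3,\ n^2 w(e_j))$, a ratio of $n^5$, so the basic sketch is applicable to every non-trivial connected component of $G'_j$ and uses $\tilde O(c/\veps)$ bits on a component with $c$ super-vertices. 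Writing $N_j$ for the total number of super-vertices appearing in non-trivial components of $G'_j$, iteration $j$ contributes $\tilde O(N_j/\veps)$ bits to the sketch.

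The technical heart is to show $\sum_{j\in\mathcal{J}} N_j = \tilde O(n\log\log W)$, upgrading the naive bound $\sum_j N_j\le n|\mathcal{J}|=O(n\log W)$. I would amortize by charging each non-trivial super-vertex of $G'_j$ equally among the original vertices it contains, so that the contribution of a vertex $v\in V$ to $N_j$ is $1/|[v]_j|$ when $[v]_j$ is non-trivial in $G'_j$ and $0$ otherwise. Two structural observations then drive the bound: (i) the super-vertex partitions $\{\pi_j\}_{j\in\mathcal{J}}$ form a chain of refinements, becoming coarser as $w(e_j)$ decreases (since more edges cross the heavy-edge threshold $n^2 w(e_j)$ and get contracted); and (ii) each edge $e\in E$ lies in the middle-weight range $[w(e_j)/n^3,\ n^2 w(e_j))$ for only $O(\log n)$ stored iterations, because this range spans a factor $n^5$ while consecutive stored weights shift the window by at least a factor of $2$.

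The main obstacle is converting these ingredients into the $\log\log W$ factor rather than a $\log W$ factor. The plan is to partition $\mathcal{J}$ into $O(\log\log W)$ phases and argue that within each phase any fixed original vertex $v$ contributes at most $\polylog n$ to $\sum_j N_j$: either $[v]_j$ stays trivial throughout the phase (no incident edge falls into the middle range during the phase), or the partition coarsens fast enough that the harmonic-style sum $\sum_{j\in\text{phase}} 1/|[v]_j|$ telescopes against the doubling of $|[v]_j|$. Summing over $n$ vertices and $O(\log\log W)$ phases then yields $\sum_{j\in\mathcal{J}} N_j = \tilde O(n\log\log W)$, and plugging into the per-iteration bound gives the claimed sketch size $\tilde O(n/\veps\cdot\log\log W)$, which dominates the $O(n)$ words used by the spanning tree and completes the proof.
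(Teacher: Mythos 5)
Your two structural observations are correct, but the step that actually carries the proof is missing. Nothing you have stated implies that ``the partition coarsens fast enough that the harmonic-style sum telescopes against the doubling of $|[v]_j|$'': observation (ii) controls, for each \emph{fixed} edge of $G$, the number of stored iterations in which it lies in the middle range, but the edges incident to the super-vertex $[v]_j$ change as the partition coarsens, so (ii) by itself does not bound how long $[v]_j$ can sit in a non-trivial component without growing; moreover you never define the $O(\log\log W)$ phases nor give any reason a phase forces $|[v]_j|$ to double. The paper supplies exactly the missing coarsening mechanism: with $M=5\log_2 n$, every edge present in $G'_{j_i}$ has weight at least $w(e_{j_i})/n^3\geq 2^M w(e_{j_{i+M}})/n^3=n^2 w(e_{j_{i+M}})$, so $M$ stored iterations later all such edges have infinite weight and are contracted, every connected component of $G'_{j_i}$ collapses to a single super-vertex, and hence $n_{j_{i+M}}\leq m_{j_i}$; telescoping over the residues modulo $M$ gives $\sum_i (n_{j_i}-m_{j_i})\leq Mn=O(n\log n)$, so the second component costs $\tO(n/\veps)$ with \emph{no} dependence on $W$ at all (in particular the bound $|\mathcal{J}|=O(\log W)$ is never needed). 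Your charging scheme can in fact be repaired in the same spirit: while $[v]_j$ equals a fixed set $C$ lying in a non-trivial component, connectivity forces some edge leaving $C$ to have weight in $[w(e_j)/n^3,\,n^2 w(e_j))$, and no edge leaving $C$ may have weight $\geq n^2 w(e_j)$ (else $C$ would have been merged), so $w(e_j)$ is pinned to a factor-$n^5$ window around the heaviest edge leaving $C$, i.e.\ at most $O(\log n)$ stored iterations per value of $[v]_j$; since the distinct values of $[v]_j$ form a strictly increasing chain, each vertex contributes $O(\log n)\cdot\sum_{k\ge 1} 1/k=O(\log^2 n)$ in total --- no phases and no $\log\log W$ are needed for this component.

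A secondary but real issue is your accounting of the first component, which is where the $\log\log W$ factor in the lemma actually originates. Storing each tree edge ``with its weight in $O(1)$ machine words'' costs $\Theta(n\log W)$ bits if weights are kept exactly, which exceeds the claimed bound in bits when $W$ is very large; the paper instead stores a $(1+\eps/2)$-approximation of each weight in $O(\log(\log W/\eps))$ bits, giving $O(n\log(\log W/\eps))$ bits for the spanning tree. So even if your bound $\sum_{j\in\mathcal{J}}N_j=\tO(n\log\log W)$ were established, you would still need this rounding step to land on the stated bit bound, and you would be attributing the $\log\log W$ factor to the wrong component.
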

\begin{proof}
The first component of the sketch is just a list of $n-1$ edges with
their edge weights, hence its size is $O(n\log (\log W/\eps))$ (we can
store a $1+\eps/2$ approximation to each weight using space $\log
(\log W/\eps)$).

The second component of the sketch has $n-1$ parts, 
one for each $G'_j$ where $j\in[n-1]$.
For some of these $j$ values, we compute and store the basic sketch 
for every connected components of $G'_j$ that is of size at least $2$.
Denoting by $n_j$ the number of vertices in $G'_j$,
and by $m_j$ the number of connected components in $G'_j$,
the storage requirement for each $G'_j$ 
is at most $\tO(\tfrac{n_j-m_j}{\veps})$,
because each connected component of size $s\ge 2$ 
requires storage $\tO(\tfrac{s}{\veps}) \leq \tO(\tfrac{s-1}{\veps})$,
and these sizes (the different $s$ values) add up to at most $n$.

Denote the values of $j$ for which we do store a basic sketch for $G'_j$
by $j_1<j_2<\cdots<j_p$, 
where by construction $w(e_{j_i}) / w(e_{j_{i+1}}) \geq 2$.
Summing over these values of $j$, the second component's storage requirement 
is at most 
\begin{equation} \label{eq:storage1}
\tO(\tfrac{1}{\veps} \sum_{i\in[p]}(n_{j_i}-m_{j_i}) ).
\end{equation}
To ease notation, let $M\eqdef 5\log_2 n$,
and consider the graphs $G'_{j_i}$ and $G'_{j_{i+M}}$ for some $[i\in p-M-1]$.
Observe that every edge in $G'_{j_i}$ has weight at least 
$w(e_{j_i})/n^3 \geq 2^M \cdot w(e_{j_{i+M}})/n^3 = n^2 \cdot w(e_{j_{i+M}})$
(because edges of smaller weight are removed);
thus, in $G_{j_{i+M}}$, these same edges have infinite weight, 
and then to create the reduced form $G_{j_{i+M}}$, these edges are contracted.
It follows from this observation that every connected component in $G'_{j_i}$
becomes in $G_{j_{i+M}}$ a single vertex, hence $n_{j_{i+M}} \leq m_{j_i}$
(we do not obtain equality since additional contractions may occur).
Using this last inequality, for every $i^*\in[M]$,
we can bound the following by a telescopic sum
\[
  \sum_{i=i^*,i^*+M,i^*+2M,\ldots} (n_{j_i}-m_{j_i})
  \leq \sum_{i=i^*,i^*+M,i^*+2M,\ldots} (n_{j_i}-n_{j_{i+M}})
  \leq n_{j_{i^*}}
  \leq n,
\]
and therefore
\[
  \sum_{i\in[p]}(n_{j_i}-m_{j_i}) )
  \leq \sum_{i^*\in[M]} \sum_{i=i^*,i^*+M,i^*+2M,\ldots} (n_{j_i}-m_{j_i})
  \leq M\cdot n.
\]
Plugging this last inequality into \eqref{eq:storage1}, 
we obtain that the second component's storage requirement is at most
$M\cdot \tO(n/\veps)$, which is still bounded by $\tO(n/\veps)$.
\end{proof}

\section{``For Each'' Sketch Requires Size $\Omega(n/\eps)$}
\label{sec:nEpsLB}

The following theorem proves that our sketch from Theorem \ref{thm:upper} 
achieves optimal space up to polylogarithmic factors, 
even for unweighted graphs.

\begin{theorem}
Fix an integer $n$ and $\eps\in[2/n,1/2]$. Suppose $\sk$ 
is a sketching algorithm that outputs at most $s=s(n,\eps)$ bits,
and $\est$ is an estimation algorithm, such that together,
for every $n$-vertex graph $G$ and subset $S\subset V$, 
with probability at least $9/10$ 
the estimation procedure is correct up to factor $1+\eps$, i.e., 
\[
  \Pr\Big[\est(S,\sk(G))\in (1\pm \eps)\cdot\card{E(S,\bar S)}\Big]\ge 9/10.
\]
Then $s \ge \Omega(n/\eps)$.
\end{theorem}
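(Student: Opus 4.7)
The plan is to reduce from the classical one-way $\mathrm{INDEX}_N$ communication problem, whose one-way complexity is $\Omega(N)$ for any constant error probability below $1/2$; we will take $N=\Theta(n/\eps)$. Fix $d=\lfloor 1/(16\eps)\rfloor$ and a deterministic $d$-regular bipartite graph $B=(L\cup R,F)$ with $|L|=|R|=n/2$, so $|F|=nd/2=\Theta(n/\eps)$. Such $B$ exists whenever $d\le n/2$, which follows from $\eps\ge 2/n$. (The narrow regime $\eps>1/16$ is handled separately by a trivial $\Omega(n)=\Omega(n/\eps)$ bound, e.g.\ by encoding one bit per vertex through vertex degrees in $\{2,8\}$ and querying singleton cuts.)

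Given a uniform INDEX input $x\in\{0,1\}^F$, Alice forms the subgraph $G=(V,\{e\in F:x_e=1\})\subseteq B$, which has maximum degree at most $d$, runs $\sk$ on $G$, and sends the resulting $s$-bit sketch to Bob. Bob holds a uniform index $e^*=(u^*,v^*)\in F$ and wants $Y:=x_{e^*}$. He issues \emph{three} cut queries $S_1=\{u^*\}$, $S_2=\{v^*\}$, $S_3=\{u^*,v^*\}$, obtains estimates $\alpha,\beta,\gamma$ from $\est$, and outputs $\hat Y=\mathrm{round}\bigl((\alpha+\beta-\gamma)/2\bigr)$. Letting $d_1$ (respectively $d_2$) denote the number of $G$-edges from $u^*$ (respectively $v^*$) into $V\setminus\{u^*,v^*\}$, we have $\deg_G(u^*)=d_1+Y$, $\deg_G(v^*)=d_2+Y$, and $|E_G(S_3,\bar S_3)|=d_1+d_2$, yielding the key identity
\[
  \deg_G(u^*)+\deg_G(v^*)-|E_G(S_3,\bar S_3)|=2Y.
\]
When all three estimates are individually $(1\pm\eps)$-accurate, the error propagates as
\[
  \bigl|\tfrac12(\alpha+\beta-\gamma)-Y\bigr| \le \tfrac{\eps}{2}\bigl(\deg_G(u^*)+\deg_G(v^*)+(d_1+d_2)\bigr) \le 2\eps d \le 1/8,
\]
so $\hat Y=Y$ exactly. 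A union bound over the three queries gives overall success probability at least $1-3/10=7/10>2/3$, so Bob solves $\mathrm{INDEX}_N$ with error below $1/3$, and the standard one-way lower bound forces $s\ge\Omega(N)=\Omega(n/\eps)$.

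The main obstacle I anticipate is the \emph{decoding identity} itself: the cut $\{u^*,v^*\}$ alone is independent of $Y$ (the potential edge $(u^*,v^*)$ is interior to $S_3$), and each singleton cut mixes $Y$ with an unknown $d_i$, so no single cut query can reveal $Y$; the three-query combination above is what decouples $Y$ from $d_1,d_2$. Because the $(1+\eps)$-multiplicative errors scale with the cut magnitudes, the factor-$2$ signal from $Y$ survives the combination only if every cut value en route is $O(1/\eps)$, which is precisely what forces the $d$-regular, $d=\Theta(1/\eps)$ structure of $B$: larger degrees would drown the signal in error, while much smaller degrees would shrink the INDEX instance below the target $\Theta(n/\eps)$ bits.
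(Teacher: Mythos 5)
Your proposal is correct and is essentially the paper's own argument: both encode $\Theta(n/\eps)$ bits as presence/absence of edges in a bipartite host graph of degree $\Theta(1/\eps)$, and decode a single bit via the three cut queries $\{u\},\{v\},\{u,v\}$, using that the $(1+\eps)$-multiplicative guarantee on cuts of size $O(1/\eps)$ gives additive error below the edge/no-edge gap of $2$, with a union bound over the three queries. Phrasing the final step as a reduction to one-way $\mathrm{INDEX}$ rather than the paper's direct information-theoretic encoding argument is only a cosmetic difference (your explicit handling of the large-$\eps$ regime is a small tidiness bonus).
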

\begin{proof}
We will show how to encode a bit-string of length $l\eqdef n/(8\eps)$ into a
graph, so that, given its sketch $\sk(G)$, one can reconstruct any bit
of the string with constant probability. 
Standard information-theoretical argument would
then imply that $s\geq\Omega(l) = \Omega(n/\eps)$.

Given a string $x\in \{0,1\}^l$, we embed it into a bipartite graph
$G$ on with $n/2$ vertices on each side, 
and vertex degrees bounded by $D\eqdef 1/(4\eps)$ as follows. 
Partition the vertices on each side into disjoint blocks of $D$,
and let the $i$-th block on the left side and on the right side 
form a (bipartite) graph which we call $G_i$, for $i=1,\ldots,n/(2D)$. 
Then partition the string $x$ in $n/(2D)$ blocks,
each block is of length $D^2$ and describes the adjacency matrix 
of some bipartite $G_i$.

We now show that evaluating a bit from the string $x$ corresponds to testing 
the existence of some edge $(u,v)$ from some graph $G_i$, which we can do 
using the $1+\eps$ approximating sketch only. 
Formally, let $\delta(S)$ be the cut
value of the set $S$, i.e., $\card{E(S,\bar S)}$,
and observe that 
\[
  \delta(\{u\})+\delta(\{v\})-\delta(\{u,v\}) =
  \begin{cases}
    0 & \text{if $(u,v)$ is an edge in $G$;} \\
    2 & \text{otherwise}.
  \end{cases}
  \]
Since the considered values of $\delta(\cdot)$ are bounded by $D$, 
the sketch estimates each such value with additive error at most $\eps D=1/4$, 
which is enough to distinguish between the two cases.
Furthermore, since we query the sketch only $3$ times, the
probability of correct reconstruction of the bit is at least $7/10$. 
The lower bound follows.
\end{proof}

{\small 
\bibliographystyle{alphaurlinit}
\bibliography{robi} 

\newcommand{\etalchar}[1]{$^{#1}$}
\begin{thebibliography}{BGPW13}

\bibitem[AG09]{ag09}
K.~J. Ahn and S.~Guha.
\newblock Graph sparsification in the semi-streaming model.
\newblock In {\em Proceedings of International Colloquium on Automata,
  Languages and Programming (ICALP)}, pages 328--338. Springer-Verlag, 2009.
\newblock \href {http://dx.doi.org/10.1007/978-3-642-02930-1_27}
  {\path{doi:10.1007/978-3-642-02930-1_27}}.

\bibitem[AGM12]{agm12}
K.~J. Ahn, S.~Guha, and A.~McGregor.
\newblock Graph sketches: Sparsification, spanners, and subgraphs.
\newblock In {\em Proceedings of the ACM Symposium on Principles of Database
  Systems (PODS)}, pages 5--14. ACM, 2012.
\newblock \href {http://dx.doi.org/10.1145/2213556.2213560}
  {\path{doi:10.1145/2213556.2213560}}.

\bibitem[Alo97]{Alon97-expansion}
N.~Alon.
\newblock On the edge-expansion of graphs.
\newblock {\em Comb. Probab. Comput.}, 6(2):145--152, June 1997.
\newblock \href {http://dx.doi.org/10.1017/S096354839700299X}
  {\path{doi:10.1017/S096354839700299X}}.

\bibitem[ARV09]{ARV09}
S.~Arora, S.~Rao, and U.~Vazirani.
\newblock Expander flows, geometric embeddings and graph partitioning.
\newblock {\em J. ACM}, 56(2):1--37, 2009.
\newblock \href {http://dx.doi.org/http://doi.acm.org/10.1145/1502793.1502794}
  {\path{doi:http://doi.acm.org/10.1145/1502793.1502794}}.

\bibitem[BGPW13]{BGPW13}
M.~Braverman, A.~Garg, D.~Pankratov, and O.~Weinstein.
\newblock Information lower bounds via self-reducibility.
\newblock In {\em Computer Science – Theory and Applications}, volume 7913 of
  {\em Lecture Notes in Computer Science}, pages 183--194. Springer, 2013.
\newblock \href {http://dx.doi.org/10.1007/978-3-642-38536-0_16}
  {\path{doi:10.1007/978-3-642-38536-0_16}}.

\bibitem[BK96]{BK96}
A.~A. Bencz{\'u}r and D.~R. Karger.
\newblock Approximating ${\rm s}$-${\rm t}$ minimum cuts in $\tilde {O}(n\sp
  2)$ time.
\newblock In {\em 28th Annual ACM Symposium on Theory of Computing}, pages
  47--55. ACM, 1996.
\newblock \href {http://dx.doi.org/10.1145/237814.237827}
  {\path{doi:10.1145/237814.237827}}.

\bibitem[BK02]{BK02}
A.~A. Bencz{\'u}r and D.~R. Karger.
\newblock Randomized approximation schemes for cuts and flows in capacitated
  graphs.
\newblock {\em CoRR}, cs.DS/0207078, 2002.
\newblock Available from: \url{http://arxiv.org/abs/cs.DS/0207078}.

\bibitem[BSS12]{BSS12}
J.~D. Batson, D.~A. Spielman, and N.~Srivastava.
\newblock Twice-ramanujan sparsifiers.
\newblock {\em SIAM J. Comput.}, 41(6):1704--1721, 2012.
\newblock \href {http://dx.doi.org/10.1137/090772873}
  {\path{doi:10.1137/090772873}}.

\bibitem[CQWZ14]{CQWZ14}
J.~Chen, B.~Qin, D.~P. Woodruff, and Q.~Zhang.
\newblock A sketching algorithm for spectral graph sparsification.
\newblock Manuscript, 2014.

\bibitem[FHHP11]{FHHP11}
W.~S. Fung, R.~Hariharan, N.~J. Harvey, and D.~Panigrahi.
\newblock A general framework for graph sparsification.
\newblock In {\em Proceedings of the Symposium on Theory of Computing (STOC)},
  pages 71--80. ACM, 2011.
\newblock \href {http://dx.doi.org/10.1145/1993636.1993647}
  {\path{doi:10.1145/1993636.1993647}}.

\bibitem[GKK10]{goel2010graph}
A.~Goel, M.~Kapralov, and S.~Khanna.
\newblock Graph sparsification via refinement sampling.
\newblock {\em CoRR}, 2010.
\newblock \href {http://arxiv.org/abs/1004.4915} {\path{arXiv:1004.4915}}.

\bibitem[GKP12]{GKP12}
A.~Goel, M.~Kapralov, and I.~Post.
\newblock Single pass sparsification in the streaming model with edge
  deletions.
\newblock {\em CoRR}, abs/1203.4900, 2012.
\newblock \href {http://arxiv.org/abs/1203.4900} {\path{arXiv:1203.4900}}.

\bibitem[GRV09]{GRV09-spanning}
N.~Goyal, L.~Rademacher, and S.~Vempala.
\newblock Expanders via random spanning trees.
\newblock In {\em Proceedings of the ACM-SIAM Symposium on Discrete Algorithms
  (SODA)}, pages 576--585, 2009.

\bibitem[HW96]{hw96}
M.~R. Henzinger and D.~P. Williamson.
\newblock On the number of small cuts in a graph.
\newblock {\em Inf. Process. Lett.}, 59(1):41--44, 1996.

\bibitem[Kar00]{karger2000minimum}
D.~R. Karger.
\newblock Minimum cuts in near-linear time.
\newblock {\em Journal of the ACM (JACM)}, 47(1):46--76, 2000.

\bibitem[KK15]{kk15}
D.~Kogan and R.~Krauthgamer.
\newblock Sketching cuts in graphs and hypergraphs.
\newblock In {\em 6th Innovations in Theoretical Computer Science Conference},
  2015.
\newblock To appear.
\newblock \href {http://arxiv.org/abs/1409.2391} {\path{arXiv:1409.2391}}.

\bibitem[KL02]{KL02}
D.~R. Karger and M.~S. Levine.
\newblock Random sampling in residual graphs.
\newblock In {\em Proceedings of the Symposium on Theory of Computing (STOC)},
  pages 63--66, 2002.

\bibitem[KL13]{KL13-sparsification}
J.~A. Kelner and A.~Levin.
\newblock Spectral sparsification in the semi-streaming setting.
\newblock {\em Theory Comput. Syst.}, 53(2):243--262, 2013.
\newblock \href {http://dx.doi.org/10.1007/s00224-012-9396-1}
  {\path{doi:10.1007/s00224-012-9396-1}}.

\bibitem[KLM{\etalchar{+}}14]{KLMMS14}
M.~Kapralov, Y.~T. Lee, C.~Musco, C.~Musco, and A.~Sidford.
\newblock Single pass spectral sparsification in dynamic streams.
\newblock In {\em Proceedings of the Symposium on Foundations of Computer
  Science (FOCS)}, 2014.
\newblock To appear.
\newblock \href {http://arxiv.org/abs/1407.1289} {\path{arXiv:1407.1289}}.

\bibitem[KMP10]{KMP10}
I.~Koutis, G.~L. Miller, and R.~Peng.
\newblock Approaching optimality for solving {SDD} linear systems.
\newblock In {\em Proceedings of the Symposium on Foundations of Computer
  Science (FOCS)}, pages 235--244, 2010.

\bibitem[KMP11]{KMP11}
I.~Koutis, G.~L. Miller, and R.~Peng.
\newblock A nearly-$m\log n$ time solver for {SDD} linear systems.
\newblock In {\em Proceedings of the Symposium on Foundations of Computer
  Science (FOCS)}, pages 590--598, 2011.

\bibitem[KNPR15]{knpr15}
H.~Klauck, D.~Nanongkai, G.~Pandurangan, and P.~Robinson.
\newblock The distributed complexity of large-scale graph processing.
\newblock In {\em Proceedings of the ACM-SIAM Symposium on Discrete Algorithms
  (SODA)}, 2015.
\newblock To appear.
\newblock \href {http://arxiv.org/abs/1311.6209} {\path{arXiv:1311.6209}}.

\bibitem[KP12]{KP12}
M.~Kapralov and R.~Panigrahy.
\newblock Spectral sparsification via random spanners.
\newblock In {\em 3rd Innovations in Theoretical Computer Science Conference},
  pages 393--398. ACM, 2012.
\newblock \href {http://dx.doi.org/10.1145/2090236.2090267}
  {\path{doi:10.1145/2090236.2090267}}.

\bibitem[LPS88]{LPS88-ramanujan}
A.~Lubotzky, R.~Phillips, and P.~Sarnak.
\newblock Ramanujan graphs.
\newblock {\em Combinatorica}, 8(3):261--277, 1988.

\bibitem[Mad10]{madry2010fast}
A.~Madry.
\newblock Fast approximation algorithms for cut-based problems in undirected
  graphs.
\newblock In {\em Proceedings of the Symposium on Foundations of Computer
  Science (FOCS)}, pages 245--254. IEEE, 2010.

\bibitem[MSS13]{MSS13}
A.~Marcus, D.~A. Spielman, and N.~Srivastava.
\newblock Interlacing families i: Bipartite ramanujan graphs of all degrees.
\newblock In {\em Proceedings of the Symposium on Foundations of Computer
  Science (FOCS)}, pages 529--537. IEEE Computer Society, 2013.
\newblock \href {http://dx.doi.org/10.1109/FOCS.2013.63}
  {\path{doi:10.1109/FOCS.2013.63}}.

\bibitem[Nil91]{Nil91}
A.~Nilli.
\newblock On the second eigenvalue of a graph.
\newblock {\em Discrete Math}, 91:207--210, 1991.
\newblock \href {http://dx.doi.org/10.1016/0012-365X(91)90112-F}
  {\path{doi:10.1016/0012-365X(91)90112-F}}.

\bibitem[She09]{sherman2009breaking}
J.~Sherman.
\newblock Breaking the multicommodity flow barrier for {$O(\sqrt{\log
  n})$}-approximations to sparsest cut.
\newblock In {\em Proceedings of the Symposium on Foundations of Computer
  Science (FOCS)}, pages 363--372, 2009.

\bibitem[SS11]{SS11}
D.~A. Spielman and N.~Srivastava.
\newblock Graph sparsification by effective resistances.
\newblock {\em SIAM J. Comput.}, 40(6):1913--1926, December 2011.
\newblock \href {http://dx.doi.org/10.1137/080734029}
  {\path{doi:10.1137/080734029}}.

\bibitem[ST04]{ST04a}
D.~A. Spielman and S.-H. Teng.
\newblock Nearly-linear time algorithms for graph partitioning, graph
  sparsification, and solving linear systems.
\newblock In {\em Proceedings of the Symposium on Theory of Computing (STOC)},
  pages 81--90. ACM, 2004.
\newblock \href {http://dx.doi.org/10.1145/1007352.1007372}
  {\path{doi:10.1145/1007352.1007372}}.

\bibitem[ST11]{ST11}
D.~A. Spielman and S.-H. Teng.
\newblock Spectral sparsification of graphs.
\newblock {\em SIAM J. Comput.}, 40(4):981--1025, July 2011.
\newblock \href {http://dx.doi.org/10.1137/08074489X}
  {\path{doi:10.1137/08074489X}}.

\bibitem[WZ13]{wz13}
D.~P. Woodruff and Q.~Zhang.
\newblock When distributed computation is communication expensive.
\newblock In {\em Distributed Computing - 27th International Symposium, {DISC}
  2013, Jerusalem, Israel, October 14-18, 2013. Proceedings}, pages 16--30,
  2013.
\newblock \href {http://dx.doi.org/10.1007/978-3-642-41527-2_2}
  {\path{doi:10.1007/978-3-642-41527-2_2}}.

\end{thebibliography}
}

\appendix

\end{document}